\documentclass[aps,prl,twocolumn,superscriptaddress,longbibliography]{revtex4-2}

\usepackage[utf8]{inputenc}
\usepackage[T1]{fontenc}
\usepackage{amsmath}
\usepackage{amsthm}
\usepackage[french, german, english]{babel}
\usepackage{physics, quantikz}
\usepackage{amssymb}
\usepackage{mathtools}
\usepackage{bm}
\usepackage[version=4]{mhchem}
\usepackage{graphicx}
\usepackage{xcolor}
\usepackage{subcaption}
\usepackage{enumitem}
\usepackage[section]{placeins}
\usepackage[left=2.5cm,right=2.5cm,top=2cm, bottom=2cm]{geometry}
\usepackage{tikz}
\usetikzlibrary{babel,shapes.geometric, arrows}

\usepackage{hyperref}
\usepackage[capitalise]{cleveref}
\hypersetup{
    colorlinks,
    linkcolor={red!70!black},
    citecolor={blue},
    urlcolor={blue!40!black}
}

\usepackage{bbm}	
\usepackage{bm}     

\renewcommand{\selectlanguage}[1]{}

\DeclareCaptionFont{eightpt}{\fontsize{8pt}{9pt}\selectfont}

\DeclareMathOperator{\polylog}{polylog}

\newcommand{\bigO}{\mathcal O}
\newcommand{\smallo}{o}
\newcommand{\RR}{\mathbbm R}
\newcommand{\NN}{\mathbbm N}

\newcommand{\eps}{\varepsilon}

\newcommand{\cre}[1]{a^{\dag}_{#1}}	
\newcommand{\des}[1]{a_{#1}}			

\newcommand{\T}[2]{T_{#1#2}}

\newcommand{\decayalgebra}{\mathcal{D}}

\newcommand{\nsys}{N}					

\newcommand{\nnuc}{n}

\newcommand{\nanc}{n_a}				
\newcommand{\lowprecbits}{D}

\newcommand{\orb}[1]{\phi_{#1}}			
\renewcommand{\ao}[1]{\chi_{#1}}

\newcommand{\rupper}{r_{\max}}
\newcommand{\localsize}{L}				

\newcommand{\degexp}{K}				

\newcommand{\qft}{\textsc{QFT}}
\tikzset{
  mux/.style={
    draw=none, fill=none, inner sep=0pt,
    path picture={
      \fill[black]
        ([shift={(-1mm,-1mm)}]path picture bounding box.center)
        rectangle
        ([shift={( 1mm, 1mm)}]path picture bounding box.center);
    }
  }
}
\newcommand{\mctrl}[1]{\gate[style=mux]{}\vqw{#1}}

\newtheorem{definition}{Definition}

\newtheorem{lemma}[definition]{Lemma}
\newtheorem{theorem}[definition]{Theorem}

\begin{document}

\title{Sublinear-Depth Quantum Simulation of Electrons with Atomic Orbitals}
\author{Jakob Günther}
\affiliation{\footnotesize Center for Theoretical Physics - a Leinweber Institute, MIT, Cambridge Massachusetts 02139, USA}
\affiliation{\footnotesize Department of Mathematical Sciences, University of Copenhagen, 2100 Copenhagen, Denmark}

\author{Aram W.~Harrow}
\affiliation{\footnotesize Center for Theoretical Physics - a Leinweber Institute, MIT, Cambridge Massachusetts 02139, USA}

\begin{abstract}
Discretizing electronic degrees of freedom into atomic orbitals is the default choice to accurately model electrons in molecules in a compact and flexible way.
We consider the task of simulating such electronic structure models on a quantum computer for $\nsys$ orbitals, keeping the number of orbitals per atom constant.
Atomic orbitals give rise to Hamiltonians with potentially up to $\bigO(\nsys^4)$ terms and little structure, which made it challenging to match the complexity of methods based on plane waves and real-space grids. 
In this work, we show that a Trotter step for atomic orbitals can be implemented in depth $\bigO(\polylog(\nsys))$ without ancillas by combining a rigorous treatment of orbital localization, a hierarchical Hamiltonian decomposition based on the Fast Multipole Method, and shallow quantum Fourier arithmetic circuits.
By bounding the Trotter error, we obtain sublinear simulation depth, and our total gate count $\nsys^{5/3 + o(1)}$ matches the best scaling known for any basis.
\end{abstract}

\maketitle

Simulating electrons in materials and molecules is among the most promising applications of digital quantum computing.
The performance of a quantum simulation algorithm is generally described with respect to three parameters: the evolution time $t$, the accuracy $\eps$, and a system-size parameter $\nsys$.
Most research on electronic structure simulation rests on one of two simulation paradigms: Trotter product formulas and the qubitization framework.
The performance in terms of $t$ and $\eps$ is understood well for both methods. 
For qubitization, the asymptotic scaling is $\bigO(t + \log\small(\eps^{-1}\small))$ \cite{low_hamiltonian_2019}, while the $p$th-order Trotter formula scales as $\bigO(t^{1 + 1/p}\eps^{-1/p})$ \cite{berry_efficient_2007}.
The scaling with $\nsys$, however, depends on the type of interactions and the decisions made in modeling the system.
Tight bounds are often not known, in particular for systems with less structured interactions.
Electronic structure simulation based on atomic orbitals is a primary example of this.
The classical computational framework of molecular electronic structure rests on atomic orbitals, which underpin many central concepts of chemistry, reflecting the insight that atoms remain meaningful subunits within a molecule.
Improving and understanding the system-size dependence is a long-standing problem in the field \cite{aspuru-guzik_simulated_2005,whitfield_simulation_2011,wecker_gate-count_2014,hastings_improving_2015,poulin_trotter_2015,mcclean_exploiting_2014,reiher_elucidating_2017,motta_low_2021,babbush_exponentially_2016,babbush_exponentially_2017,berry_qubitization_2019,von_burg_quantum_2021,lee_even_2021,rocca_reducing_2024,liu_block_2025,gunther_phase_2026,low_denser_2026}.

\begin{table}[t]
	\captionsetup{font=eightpt}
	\fontsize{7.8pt}{7pt}\selectfont
	\begin{tabular}{|c|c|c|c|c|c|} \hline
		ref. & method & dim & ancillas & total gates & depth \\ \hline
		Wecker \cite{wecker_gate-count_2014} & Trotter & 3D & 0 & $\bigO(\nsys^{10})$ & $\bigO(\nsys^{10})$ \\
		Wecker \cite{wecker_gate-count_2014} & Trotter & 3D & $\bigO(\nsys)$ & $\bigO(\nsys^{10})$ & $\bigO(\nsys^9)$ \\
		Hastings \cite{hastings_improving_2015} & Trotter & 3D & 0 & $\bigO(\nsys^9)$ & $\bigO(\nsys^8)$ \\
		Hastings \cite{hastings_improving_2015} & Trotter & 3D & $\bigO(\nsys)$ & $\bigO(\nsys^9)$ & $\bigO(\nsys^7)$ \\
		Poulin \cite{poulin_trotter_2015} & Trotter & 3D & 0 & $\sim\nsys^{6}$ & $\sim \nsys^5$ \\
		McClean \cite{mcclean_exploiting_2014} & Trotter & 3D & 0 & $\sim\nsys^5$ & $\sim \nsys^5$ \\
		Babbush \cite{babbush_exponentially_2016} & Taylor & 3D & $\tilde\bigO(1)$ & $\tilde\bigO(\nsys^5)$ & $\tilde\bigO(\nsys^5)$   \\
		Berry \cite{berry_qubitization_2019} & qubitz. & 1D & $\sim\log(\nsys)$ & $\sim\nsys^{2.3}$ & $\sim\nsys^{2.3}$  \\
		von Burg \cite{von_burg_quantum_2021} & qubitz. & 1D & $\sim\nsys^{1.5}$ & $\sim\nsys^{3.4}$ & $\sim\nsys^{3.4}$ \\
		Lee \cite{lee_even_2021} & qubitz. & 1D & $\sim\nsys$ & $\sim\nsys^{2.1}$ & $\sim\nsys^{2.1}$ \\
		Liu \cite{liu_block_2025} & qubitz. & 3D & $\tilde\bigO(\nsys)$ & $\tilde\bigO(\nsys^3)$ & $\tilde\bigO(\nsys^3)$ \\ 
		\bf{This work} & Trotter & 1D & $0$ & $\nsys^{1 + \smallo(1)}$ & $\nsys^{\smallo(1)}$ \\
		\bf{This work} & Trotter & 2D & $0$ & $\nsys^{3/2 + \smallo(1)}$ & $\nsys^{1/2 + \smallo(1)}$ \\
		\bf{This work} & Trotter & 3D & $0$ & $\nsys^{5/3 + \smallo(1)}$ & $\nsys^{2/3 + \smallo(1)}$ \\\hline
	\end{tabular}
	\caption{\fontsize{7.7pt}{6.5pt}\selectfont $\nsys$-dependence of electronic structure simulation with atomic orbitals for $\Theta(\nsys)$ electrons.
	$O(\cdot), \tilde O(\cdot)$ indicate theoretical upper bounds, $\sim$ indicates empirical results.
	Only clean ancillas are counted.
	Refs.~\cite{berry_qubitization_2019,von_burg_quantum_2021,lee_even_2021} are based on hydrogen chain fits of \cite{lee_even_2021}.
	Lower ancilla scaling is possible for \cite{von_burg_quantum_2021,lee_even_2021} at an increase in total gate complexity \cite{low_denser_2026}.}
		\label{table}
\end{table}

Early works on quantum simulation for molecules employed atomic orbitals and proposed low-order Trotter formulas with direct implementation of all $\bigO(\nsys^4)$ terms, but their cost analysis resulted in impractically large polynomial scaling \cite{wecker_gate-count_2014,mcclean_exploiting_2014,hastings_improving_2015,poulin_trotter_2015}.
Improvements were enabled by a switch to LCU-based simulation algorithms \cite{babbush_exponentially_2016,babbush_exponentially_2017,berry_qubitization_2019} in combination with increasingly sophisticated Hamiltonian decomposition techniques \cite{berry_qubitization_2019,von_burg_quantum_2021,lee_even_2021,rocca_reducing_2024}, and, more recently, by restricting to the low-energy subspace \cite{low_fast_2025} (which is not done in this work).
An overview of select results is given in \Cref{table}.
Parallel lines of research studied the complexity of electronic structure simulation with plane waves \cite{childs_theory_2021,babbushEncodingElectronicSpectra2018,babbushLowDepthQuantumSimulation2018,kivlichanQuantumSimulationElectronic2018,low_interaction_2019,kivlichanImprovedFaultTolerantQuantum2020,su_nearly_2021,mcardle_exploiting_2022,babbush_quantum_2019,rubin_quantum_2024} and in a real-space basis \cite{rubin_quantum_2024,stetina_first-quantized_2025,berry2025quantumsimulationchemistryquantum,low_complexity_2023}.
In contrast to atomic orbitals, these bases are much less tailored to a specific system, necessitating a significant constant overhead in the number of orbitals.
However, their basic forms simplify the Hamiltonian to contain only $\bigO(\nsys^2)$ terms with simple analytic coefficients.
Additionally, a description in first quantization can ameliorate high costs due to basis set sizes \cite{babbush_quantum_2019,rubin_quantum_2024,stetina_first-quantized_2025,berry2025quantumsimulationchemistryquantum}.
In the large-system-size limit, keeping the atom density and orbital resolution bounded by a constant, the method by Berry et al. \cite{berry2025quantumsimulationchemistryquantum} achieves a gate complexity of $\nsys^{5/3+o(1)}$.
Overall, previous results implied that atomic orbitals are not competitive for large-scale simulations.

In this work we show that Trotter-based quantum simulation of electronic structure with atomic orbitals achieves state-of-the-art gate complexity in sublinear depth and without ancillas.
Our first contribution is to formally define a notion of orbital localization that we will prove is satisfied by realistic many-atom systems and that will enable the quantum speedups we discuss below.
We then apply several algorithmic techniques, some new and some adapted, to implement each Trotter step more efficiently, and to improve the naive bounds on Trotter error.
Together, these result in our improved performance described in \Cref{table} and \Cref{thm:main result}.

\textit{Electronic-Structure Model} ---
We consider electrons in the presence of $\nnuc$ nuclei at positions $r_1,\dots,r_{\nnuc}\in\RR^3$ arranged in a $d$-dimensional configuration and with charges $Z_I$.
 Let $\{\orb{p}\}_{p=1}^{\nsys}$ be a set of real-valued orthonormal orbitals to which fermionic creation and annihilation operators $\cre{p}$ and $\des{p}$ are associated (our method extends to complex-valued orbitals, but we keep the exposition simple).
We introduce operators $\T{p}{q} = \cre{p}\des{q} + \cre{q}\des{p}$ for $p\neq q$ and $\T{p}{p} = \cre{p}\des{p}$; then the electronic structure Hamiltonian takes the form
\begin{align}
	H &= h_1 + h_2 = \sum_{p\leq q} h_{pq} \T{p}{q} + \frac12\sum_{\substack{p\leq q\\ r\leq s}} h_{pqrs} \T{p}{q}\T{r}{s} \label{eq:defn hamil}
\end{align}
where the coefficients are defined as
\begin{align*}
	&h_{pq} = \int \orb{p}(r) \left(-\frac{\nabla^2}{2} + v(r)\right)\orb{q}(r)dr - \sum_r \frac{h_{prrq}}{2}\\
	&h_{pqrs} = \int \int \frac{\orb{p}(r_1)\orb{q}(r_1)\orb{r}(r_2) \orb{s}(r_2)}{\abs{r_1 - r_2}}  dr_1 dr_2,
\end{align*}
and $v(r)= -\sum_I\frac{Z_I}{\abs{r-r_I}}$ is the potential of the nuclei.

For molecular systems, orbitals are expressed as linear combinations of atomic orbitals of all nuclei.
Typical choices of atomic orbitals include Slater-type and Gaussian orbitals.   
These possibilities, and others, satisfy the requirements for our definition of atomic orbitals.

\begin{definition}\label[definition]{defn:atomic orbital set}
	Given $\nnuc$ nuclei at positions $(r_1,\dots,r_{\nnuc}) \in \RR^{3\nnuc}$ with minimal separation $\delta>0$, a set of normalized functions $\{\ao{I i}\} \subset L^2(\RR^3)$ for $I\in[\nnuc]$ and $i\in[\localsize]$ is an \emph{atomic-orbital set} with $\localsize$ orbitals per nucleus if $\beta,\zeta,A>0$ exist such that
        \begin{align}
          \abs{\nabla\ao{I i}(r)} \leq \beta \zeta e^{-\zeta\abs{r - r_I}}& \qquad \forall r \in \RR^3 \setminus \{r_I\} \\
          \lambda_{\min}(S)\geq A \qquad &S_{Ii,Jj} = \braket{\ao{Ii}}{\ao{Jj}}
        \end{align}
        where $\lambda_{\min}(S)$ denotes the smallest eigenvalue of $S$.
\end{definition}

The exponential-decay property and normalization imply that $\abs{\ao{Ii}(r)} \leq \beta e^{-\zeta\abs{r-r_I}}$ (see End Matter for details).
For comparison, we note that the decay of the plane-wave dual basis is only $\bigO(\abs{r}^{-1})$.
The second condition quantifies the linear dependence of the atomic-orbital set; $\lambda_{\min}(S)=1$ for orthonormal orbitals and $\lambda_{\min}(S)=0$ for linearly dependent ones.
The latter signifies an inefficient model because it contains redundant degrees of freedom, but this issue is not encountered for atomic orbitals used in practice.
However, whether $\lambda_{\min}(S)$ asymptotically reaches zero, or a value greater than zero, crucially depends on how the limit $\nsys\to\infty$ is approached.
The \textit{continuum limit} keeps $\nnuc$ fixed while increasing $\localsize \to \infty$, and it does not admit a uniform bound $A>0$.
To illustrate this, consider hydrogen orbitals at two neighboring atoms: their decay rate vanishes with increasing principal quantum number, so the corresponding orbitals on the two atoms become nearly identical and $\lambda_{\min}(S)\to 0$ for $\localsize\to\infty$.  
This long-known issue is a cause of numerical instabilities \cite{lowdin_quantum_1956,ahlenius_comments_1973}, and several works studied the decrease of $\lambda_{\min}(S)$ numerically for increasing $\localsize$ \cite{ahlenius_comments_1973,lehtola_curing_2019,hoyvik_spectrum_2020}.
Because modeling the continuum limit with atomic orbitals is inherently problematic we do not further discuss it in this work.

The situation we consider corresponds to the \textit{thermodynamic limit}, where the growth of $\nsys$ and the number of electrons is proportional to the volume of the system, so that the net charge stays close to zero.
In the context of \Cref{defn:atomic orbital set} this is accomplished by $\delta>0$ and by keeping $\localsize$ constant.
The behavior of $\lambda_{\min}(S)$ in this situation has been studied for a cubic lattice with two Gaussian orbitals per site in \cite{aissing_linear_1992}.
An important finding is that after taking the limit $\nnuc\to\infty$ a finite value $A>0$ is obtained, although $A$ decreases rapidly as the product $\delta\zeta$ decreases.
We provide a general $\nnuc$-independent lower bound on $\lambda_{\min}(S)$ that is greater than zero for sufficiently large $\delta\zeta$ (see End Matter for details).
Overall, a lower bound $A>0$ uniform in $\nsys$ in the thermodynamic limit is a justified assumption if $\delta,\beta$ and $\zeta$ are held constant.

When expressed in terms of atomic orbitals, a large part of the Coulomb integrals $h_{pqrs}$ almost vanishes, because the overlap between atomic orbitals decays exponentially with the distance between atoms.
However, atomic orbitals are in general not orthogonal, and significant interatomic spatial overlap is essential to describe chemical bonds. 
In practice, one usually starts from a pre-optimized atomic-orbital set \footnote{See \url{https://www.basissetexchange.org} for an extensive list of widely used basis sets in quantum chemistry.}, and then obtains an orthonormal, but not necessarily local, orbital basis, such as the Hartree--Fock orbitals.
A further localization step reduces the spatial extent of these orbitals, while preserving orthonormality, and many localization methods are formulated as minimization problems.
From a practitioner's viewpoint the existence of sufficiently localized orthonormal orbital bases is not questioned, but existence is \textit{a priori} not guaranteed for the general setup we defined.
While some works on quantum algorithms for chemistry simply start from localized orthonormal orbitals \cite{mcclean_exploiting_2014,babbush_exponentially_2016,liu_block_2025}, we overcome this issue by proving that \emph{Löwdin orthogonalization} \cite{lowdin_nonorthogonality_1950} leads to an orthonormal basis with the desired decay properties.

The change-of-basis matrix from the original basis $\{\ao{Ii}\}$ to the Löwdin basis is $S^{-1/2}$.
The L\"owdin basis is automatically orthonormal, assuming that $S$ is invertible, but if $S$ is well-conditioned then we can say more.  In this case, our \Cref{thm:lowdin} proves that the L\"owdin basis approximately inherits the localization properties of the atomic-orbital basis.
This builds on  a substantial body of work on entry-wise decay bounds for functions of matrices with a decay structure~\cite{jaffard_proprietes_1990,krishtal_localization_2015,benzi_decay_2012,benzi_localization_2016}.
While \cite{benzi_decay_2012} connects this theory to density matrices of many-electron systems and to truncations of Hamiltonians, we are not aware of an application to localization of molecular orbitals in the literature.

\begin{theorem}\label[theorem]{thm:lowdin}
	Given an atomic-orbital set for $\nnuc$ nuclei, the Löwdin orbitals are atomic orbitals with decay parameters $\beta',\zeta'>0$ that are functions of $\zeta,\beta,\localsize,\delta,A$ independent of $\nnuc$. 	
\end{theorem}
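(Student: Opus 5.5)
The plan is to write the Löwdin orbitals as $\tilde\chi_{Ii}=\sum_{Jj}(S^{-1/2})_{Jj,Ii}\ao{Jj}$. Two facts then carry the argument. The first is that $S^{-1/2}$ has entries decaying exponentially in $\abs{r_I-r_J}$, with constants independent of $\nnuc$. The second is that an exponentially weighted sum of exponentially decaying functions centered at the nuclei still decays exponentially around $r_I$. The minimal separation $\delta$ bounds how many nuclei can lie in a ball, which keeps all lattice-type sums uniform in $\nnuc$.

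\emph{Step 1 (decay of $S$).} From $\abs{\ao{Ii}(r)}\le\beta e^{-\zeta\abs{r-r_I}}$ and the triangle inequality, $e^{-\zeta\abs{r-r_I}}e^{-\zeta\abs{r-r_J}}\le e^{-\zeta\abs{r_I-r_J}/2}e^{-\zeta(\abs{r-r_I}+\abs{r-r_J})/2}$. Integrating gives $\abs{S_{Ii,Jj}}\le C_1 e^{-\zeta\abs{r_I-r_J}/2}$ with $C_1$ depending only on $\beta,\zeta$. Normalization gives $\norm{S}\le$ (maximal row sum). Using $\delta$-separation, the number of nuclei in a shell of radius $R$ is $O((R/\delta)^2+1)$ times a constant, so the row sum $\sum_{J}\localsize e^{-\zeta\abs{r_I-r_J}/2}$ is bounded by a constant $B(\zeta,\localsize,\delta)$. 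Together with $\lambda_{\min}(S)\ge A$, this gives $\sigma(S)\subset[A,B]$.

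\emph{Step 2 (decay of $S^{-1/2}$; the main obstacle).} This is where the Benzi--Razouk / Jaffard-type theory enters, and I expect it to be the delicate step. Since $x^{-1/2}$ is analytic on a Bernstein ellipse around $[A,B]$, there is a polynomial $p_k$ of degree $k$ with $\sup_{[A,B]}\abs{x^{-1/2}-p_k(x)}\le C_2 q^{k}$, where $q<1$ depends only on $B/A$. Exponential off-diagonal decay is not preserved exactly by products, so I will instead use a \emph{weighted-norm} argument: I introduce $D_\mu=\mathrm{diag}(e^{\mu\, \langle r_I,u\rangle})$ or, more cleanly, work in the algebra of matrices with norm $\sup_{Ii}\sum_{Jj}\abs{M_{Ii,Jj}}e^{\mu\abs{r_I-r_J}}$ for some $\mu<\zeta/2$. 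This algebra is submultiplicative because of the triangle inequality, and $S$ lies in it with norm bounded uniformly by Step 1. The alternative is to truncate: $S^{m}$ has entries supported effectively within distance $\sim m\cdot$const. Splitting $(S^{-1/2})_{Ii,Jj}=(S^{-1/2}-p_k(S))_{Ii,Jj}+p_k(S)_{Ii,Jj}$ with $k\propto\abs{r_I-r_J}$ and bounding $p_k(S)$ entries via the weighted norm (coefficients of $p_k$ grow at most geometrically) yields $\abs{(S^{-1/2})_{Ii,Jj}}\le C_3e^{-\gamma\abs{r_I-r_J}}$. Here $\gamma>0$ and $C_3$ depend only on $\zeta,\beta,\localsize,\delta,A$. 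If the weighted norm is awkward, I would fall back on the Neumann series $S^{-1/2}=c^{-1/2}\sum_n\binom{-1/2}{n}(-1)^n(I-S/c)^n$ with $c=B$. This series converges geometrically in operator norm, with ratio $1-A/B$, and the same truncation argument applies to it.

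\emph{Step 3 (decay of the orbitals and their gradients).} Apply $\nabla$ termwise to $\tilde\chi_{Ii}$:
\[
\abs{\nabla\tilde\chi_{Ii}(r)}\le\sum_{Jj}C_3e^{-\gamma\abs{r_I-r_J}}\beta\zeta e^{-\zeta\abs{r-r_J}}.
\]
With $\zeta''=\min(\gamma,\zeta)/2$, the triangle inequality gives $e^{-\gamma\abs{r_I-r_J}}e^{-\zeta\abs{r-r_J}}\le e^{-\zeta''\abs{r-r_I}}e^{-\zeta''(\abs{r_I-r_J}+\abs{r-r_J})}$. The remaining sum over $J$ is bounded by a $\delta$-dependent constant, by the same counting as in Step 1. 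This gives $\beta',\zeta'$ independent of $\nnuc$, and the bound holds pointwise away from the nuclei, where each $\nabla\ao{Jj}$ is defined, so it holds off a finite set. Normalization of $\tilde\chi_{Ii}$ is automatic from orthonormality. The Löwdin set has Gram matrix $I$, so its $\lambda_{\min}=1\ge A$, and all conditions of \Cref{defn:atomic orbital set} hold.
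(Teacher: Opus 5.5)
Your proposal is correct and follows essentially the same route as the paper. The shared steps are exponential decay of $S$ with an $\nnuc$-independent spectral interval $[A,B]$, a Chebyshev approximation of $\lambda^{-1/2}$ with degree proportional to $\abs{r_I-r_J}$, geometric control of the decay constants of powers of $S$, and a final triangle-inequality sum for the gradients of the L\"owdin orbitals; your submultiplicative weighted row-sum norm is simply a cleaner packaging of the paper's blockwise decay-class lemma for $S^k$. Your remark that the gradient bound holds only off the finite set of other nuclei is a subtlety the paper passes over, and it does not affect any later use of the bound.
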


Together with nuclear coordinates $\bm{r} \in \RR^{3\nnuc}$ and charges $\bm Z \in [Z_{\max}]^{\nnuc}$, an orthonormal basis $\{\orb{Ii}\}$ determines an electronic-structure Hamiltonian $H = H(\bm{r},\bm{Z},\{\orb{Ii}\})$ via \cref{eq:defn hamil} (with $Z_{\max}$ being a constant, because the periodic table is finite). 
\Cref{thm:lowdin} justifies that $\{\orb{Ii}\}$ is an atomic-orbital set itself if the underlying single-electron Hilbert space was generated from an atomic-orbital set.
Therefore, we are motivated to give the following definition: 

\begin{definition}[Molecular Electronic-Structure Hamiltonians]\label[definition]{defn:mesh}
	$\mathrm{MESH}(\zeta,\beta,\localsize,\delta,A)$ is the family of Hamiltonians $H(\bm{r},\bm{Z},\{\orb{Ii}\})$ with minimal nuclear distance $\delta>0$, and $\{\orb{Ii}\}$ is an orthonormal atomic-orbital set for $\bm{r}$ with local resolution $\localsize \in \NN_{>0}$ and parameters $\zeta,\beta,A>0$.
\end{definition}
This definition applies both to large molecules and to solid-state systems such as lattices.
The central problem of this work is to implement the time evolution $e^{-iHt}$ for $H \in \mathrm{MESH}(\zeta,\beta,\localsize,\delta,A)$ to precision $\eps$ in the limit $\nnuc\to\infty$ and $\eps\to 0$ while keeping other parameters constant.

\textit{Sparsity and low rank} ---
The first step to simplify $H\in \mathrm{MESH}(\zeta,\beta,\localsize,\delta,A)$ is to turn the exponential decay of the atomic orbital basis into sparsity of the Hamiltonian coefficients by means of truncation.
We do this by limiting the support of each atomic orbital to a finite-sized ball of radius $r_{\max}>0$ around its nucleus and evaluate $h_{pq}$ and $h_{pqrs}$ with these modified orbitals.
For truncation error $\eps$ we can set $r_{\max} = \bigO(\log\frac{\nsys}{\eps})$.
It is useful to consider the undirected graph representing the orbital overlaps, where the edge set $E\subset [\nsys]\times[\nsys]$ contains an orbital pair $(p,q)$ only if the corresponding nuclei $P$ and $Q$ satisfy $\abs{r_P - r_Q} < 2r_{\max}$.  
As a minimal distance $\delta$ between nuclei is assumed the graph has maximal degree $\Delta = \bigO\big((\frac{1}{\zeta\delta}\log\frac{\nsys}{\eps})^{d}\big)$. 
Consequently, only $\tilde\bigO(\nsys)$ and $\tilde\bigO(\nsys^2)$ of the coefficients $h_{pq}$ and $h_{pqrs}$ are non-zero, respectively.
The truncation is described in more detail in the End Matter.
In the following discussion it is helpful to view the Coulomb interaction as $\tilde\bigO(\nsys^2)$ interactions between $\tilde\bigO(\nsys)$ orbital pairs, instead of considering the orbitals individually. 
We also remark at this point that instead of truncating, one could take into account the small terms in a randomized fashion as discussed in \cite{hagan_composite_2023,jin_partially_2025,gunther_phase_2026}.

In the next step we take advantage of a low-rank structure of the Coulomb interaction by applying ideas of the Fast Multipole Method (FMM), a well-established classical method to speed up long-range pair interactions \cite{greengard_fast_1987, greengard_rapid_1988, greengard_new_1997,greengard_fast_1994, white_continuous_1994}.
The core idea behind it is that the Coulomb potential is well approximated by low-degree polynomials (multipole expansions) at large distances.
A hierarchical partitioning strategy recursively groups interactions into sets of interactions between spatially separated regions, resulting in a tree-structure with a logarithmic number of levels.
Analytic formulas for translating multipole expansions within and between levels allow for a very efficient construction of the potentials at all levels. 
Moreover, if particles are sufficiently homogeneously distributed, then the number of short-range interactions per particle is constant.
For fixed precision, this allows one to compute the potential at a single point in space in a constant number of steps, instead of scaling linearly as by direct computation.
Previously proposed quantum simulation algorithms based on the FMM follow its algorithmic steps, in particular the translations of multipole expansions are performed quantumly \cite{berry2025quantumsimulationchemistryquantum,wang_polylogarithmic-depth_2025}. 
In contrast, we only take inspiration from the FMM to find a suitable Hamiltonian decomposition.

In our setting, the localized supports of the $\tilde\bigO(N)$ orbital pairs take the role of the positions of point particles in the classical FMM.
The Coulomb interaction is split into short- and long-range contributions, $h_2 = h_2^{\text{short}} + h_2^{\text{long}}$, where $h_2^{\text{short}}$ only contains the $\tilde\bigO(\nsys)$ interactions in which orbital pairs $(p,q)$ and $(r,s)$ are close to each other.
We now describe key features of the hierarchical partitioning, and refer to the original work for details \cite{greengard_fast_1987,greengard_rapid_1988, greengard_new_1997}.
Let the orbitals be enclosed by a cube $[0,2^W]^3 \subset \RR^3$ for integer $W$.
The partitioning of interactions proceeds in levels, indexed by an integer $j$, and the interaction range covered at level $j$ is decreasing as $j$ increases.
At each level the interactions are determined by a partition $B_j$ of the initial cube. 
$B_1$ is the partition into $4\times 4\times 4$ sub-cubes (or \textit{boxes}), and $B_j$ is obtained by dividing each box at level $j-1$ into 8 smaller boxes with side lengths $d_j = 2^{W-j-1}$.
For each box $b \in B_j$ one defines a subset $\bar{b} \subset B_j$, such that $\bar{b}$ only contains boxes that do not border $b$ and that did not interact with $b$ at previous levels (see \cref{fig}).
We let $E_b \subset E$ be the subgraph containing orbital pairs $(p,q)$ for which $(r_P + r_Q)/2 \in b$ ($E_{\bar{b}}$ is defined analogously).
To guarantee that orbital pairs in $b$ and $\bar{b}$ do not overlap, we stop at level $J$ such that $d_J \geq 8 r_{\max}$.
The result of the hierarchical partitioning is that $h_2^{\mathrm{long}}$ can be decomposed into operators $h_b$ describing interactions between orbital pairs in $b$ and its interaction set $\bar{b}$,
\begin{align}\label{eq:h_b}
	h_b = \frac12\;\;\sum_{\mathclap{\substack{(p,q)\in E_b\\ (r,s)\in E_{\bar{b}}}}} h_{pqrs} \T{p}{q}\T{r}{s}\quad\; h_2^{\text{long}} = \sum_{j = 1}^J \sum_{b\in B_j} h_b.
\end{align}
Next, a multipole expansion of the long-range Coulomb potential into real spherical harmonics is performed, which results in the standard low-rank decomposition of FMM methods (see End Matter for details).

\begin{lemma}\label[lemma]{lemma:hb multipole}
	Truncating the multipole expansion of each $h_b$ to $h_b^{(\degexp)} = \frac12\sum_{\ell=0}^{\degexp}\sum_{m=-\ell}^{\ell} v_b^{\ell m} w_b^{\ell m}$ such that the error in $h_2^{\mathrm{long}}$ is less than $\eps$ requires $\degexp = \bigO\big(\log\frac{\nsys}{\eps}\big)$.
\end{lemma}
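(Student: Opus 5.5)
The plan is to expand the Coulomb kernel $1/\abs{r_1-r_2}$ around the centers $c_b$ and $c_{b'}$ of the interacting boxes $b$ and $b'\in\bar b$. Each truncated term should factor into an operator supported on $E_b$ times one supported on $E_{\bar b}$. The remainder is then controlled by the geometric convergence of the multipole series, which is guaranteed by the well-separatedness built into the hierarchical partition. Concretely, for $r_1$ near $c_b$ and $r_2$ near $c_{b'}$, I write $r_1-r_2 = (c_b-c_{b'}) + (\rho_1-\rho_2)$ with $\rho_i$ the offsets from the box centers. I then use the standard two-center (multipole-to-local) expansion in real spherical harmonics, $\frac{1}{\abs{r_1-r_2}} = \sum_{\ell,m}\sum_{\ell',m'} M^{\ell m}(\rho_1)\, T^{\ell m,\ell' m'}(c_b-c_{b'})\, L^{\ell' m'}(\rho_2)$, with the solid harmonics $M^{\ell m}(\rho)\propto \abs{\rho}^\ell Y_{\ell m}$. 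Inserting this into $h_{pqrs}$ defines $v_b^{\ell m} = \sum_{(p,q)\in E_b}\big(\int \orb{p}\orb{q} M^{\ell m}(r-c_b)\,dr\big) \T{p}{q}$. Here $w_b^{\ell m}$ collects the $E_{\bar b}$ pairs contracted with the translation coefficients $T$ and local moments. After truncating at total degree $\degexp$, this gives exactly the form $h_b^{(\degexp)}$; the $\bigO(\degexp^2)$ index count is immaterial here.

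For the error, I use the geometry. Every orbital pair in $E_b$ has its midpoint in $b$. Because orbitals are truncated to radius $\rupper$, the pair's charge density $\orb{p}\orb{q}$ is supported within distance $\rupper$ of that midpoint, hence within a ball of radius $R_j=\sqrt3 d_j/2+\rupper$ around $c_b$. Boxes in $\bar b$ do not border $b$, so $\abs{c_b-c_{b'}}\ge 2d_j$. Since $d_j\ge d_J\ge 8\rupper$, the ratio $\theta = 2R_j/\abs{c_b-c_{b'}}$ is bounded by a constant $\theta_0<1$ uniformly in $j$ and $b$; checking the constants is a short computation: $(\sqrt3+1/4)/2<1$. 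The classical FMM remainder bound then gives pointwise $\abs{\text{kernel}-\text{truncation}}\le \frac{C}{\abs{c_b-c_{b'}}-2R_j}\theta_0^{\degexp+1}$. This transfers to each coefficient: $\abs{h_{pqrs}-h^{(\degexp)}_{pqrs}} \le C' \theta_0^{\degexp}\,\norm{\orb{p}\orb{q}}_{1}\norm{\orb{r}\orb{s}}_{1}/d_j \le C'\theta_0^{\degexp}$, using Cauchy–Schwarz and normalization.

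Finally I sum the coefficient errors into an operator-norm error for $h_2^{\mathrm{long}}$, using $\norm{\T{p}{q}}\le 2$. The sum runs over at most $\tilde\bigO(\nsys)$ pairs $(p,q)$, and for each pair over all $(r,s)$ in the interaction sets across the $J=\bigO(\log\nsys)$ levels, so the crude triangle bound is $\tilde\bigO(\nsys^2)\cdot C'\theta_0^{\degexp}$. Requiring this to be $\le\eps$ gives $\degexp \ge \log(\mathrm{poly}(\nsys)/\eps)/\log(1/\theta_0) = \bigO(\log\frac{\nsys}{\eps})$, since $\Delta$ and $J$ contribute only polylogarithmic factors inside the logarithm. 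The main obstacle I anticipate is the uniform separation ratio: the pair supports extend up to $\rupper$ beyond the box, and the stopping rule $d_J\ge 8\rupper$ must make $\theta_0<1$ hold at every level. If the naive triangle sum is too lossy, I would also weight the per-level errors by $1/d_j$, but since only a logarithm of any polynomial prefactor enters $\degexp$, the crude bound suffices.
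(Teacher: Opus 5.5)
Your proof is correct, but it uses a different expansion from the paper's. The paper never forms a two-center (multipole-to-local) expansion. It expands $1/\lvert r_1-r_2\rvert$ once, about the center of $b$, measuring both the $E_b$ density ($r_1$) and the $E_{\bar b}$ density ($r_2$) from $c_b$. So $v^{\ell m}_{pq}$ and $w^{\ell m}_{rs}$ are the regular and irregular moments $\int\phi_p\phi_q\lvert r_1\rvert^{\ell}Y_{\ell m}$ and $\int\phi_r\phi_s\lvert r_2\rvert^{-\ell-1}Y_{\ell m}$ about the same point. The geometric facts you use (pair supports lie within $r_{\max}$ of the midpoint, $d_j\ge 8r_{\max}$, non-adjacency) give $\lvert r_1\rvert\le d_j$ and $\lvert r_2\rvert\ge 11d_j/8$. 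This yields the one-variable Legendre remainder $3(8/11)^{K}/d_j$. The paper then trades $1/d_j$ for $8/\lvert r_1-r_2\rvert$ and sums the absolute Coulomb integrals with \Cref{lemma:mesh norm}, getting $\mathcal{O}(e^{-K/4}N^2)$.

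Compared with this, your route needs the addition theorem for solid harmonics and truncation by total degree $\ell+\ell'$ to get a clean remainder. Your separation ratio $(\sqrt3+1/4)/2\approx 0.99$ is also much worse than $8/11$, so the constant hidden in $K=\mathcal{O}(\log(N/\varepsilon))$ is roughly thirty times larger. On the other hand, your crude $\tilde{\mathcal{O}}(N^2)$ term count is simpler than the paper's Coulomb summation and just as good inside the logarithm. The usual payoff of the two-center form, building the far side from reusable per-box local moments via translations, does not apply here, since the paper deliberately avoids translation steps.

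Finally, your $w_b^{\ell m}$ (translation coefficients contracted with local moments of each $b'\in\bar b$) is not the operator defined in the End Matter. You have therefore proved the bound for a different, equally valid rank-$(K+1)^2$ factorization. The paper's one-center form is what later gives the coefficient bound $\lvert v^{\ell m}_{pq}w^{\ell m}_{rs}\rvert\le(8/11)^{\ell+1}/d_j$ used for \Cref{lemma:rank1 dirty ancillas}, and the Uns\"old-theorem bound on $\lvert h'_{pqrs}\rvert$ in the Trotter-error analysis. Both would have to be rederived for your factorization.
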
  

Here, $v_b^{\ell m}$ and $w_b^{\ell m}$ are 1-electron operators acting on the qubits of $b$ and $\bar{b}$, respectively.
The sparseness-structure of these operators is still given by the graphs $E_b$ and $E_{\bar{b}}$, which have degree at most $\Delta$.
By Vizing's theorem on edge coloring \cite{balakrishnan_textbook_2012} we can decompose these graphs into at most $\Delta + 2$ matchings (graphs of maximum degree 1).
Hence, each $v_b^{\ell m}$ and $w_b^{\ell m}$ can be decomposed into $(\Delta +2)$ 1-sparse operators (\textit{i.e.} their coefficient matrix has at most one non-zero entry per row and column).

\textit{Trotterized simulation} ---
In the Jordan--Wigner encoding \cite{jordan_uber_1928} the fermionic operators are mapped to qubit operators according to
\begin{align*}
	\T{p}{q} &\to \frac{1}{2}\big(X_p \vec{Z}_{pq} X_{q} + Y_p \vec{Z}_{pq} Y_q\big) \quad\! T_{pp}\to\frac{1}{2}\left(I - Z_p\right),
\end{align*}
where $\vec{Z}_{pq} = Z_{p+1} \cdots Z_{q-1}$.
An occupation number vector in the Fock space is mapped to the computational basis state with the same bitstring representation.
The mapping depends on the ordering of the orbital indices $p \in [\nsys]$, which may be chosen arbitrarily.
Fermionic permutations (FP) are operations on fermionic states encoded in qubits that change the ordering of the fermionic modes.
Such permutations on $\nsys$ qubits can be performed in depth $d_{\mathrm{FP}} = \bigO(\log^2\nsys)$ without ancillas \cite{constantinides_low-depth_2025}, or depth $\bigO(\log\nsys)$ with $\Theta(\nsys)$ ancillas \cite{maskara_fast_2025}. 

\begin{figure}
	\centering
	\begin{subfigure}{\linewidth}
		\centering
		\includegraphics[width=\linewidth]{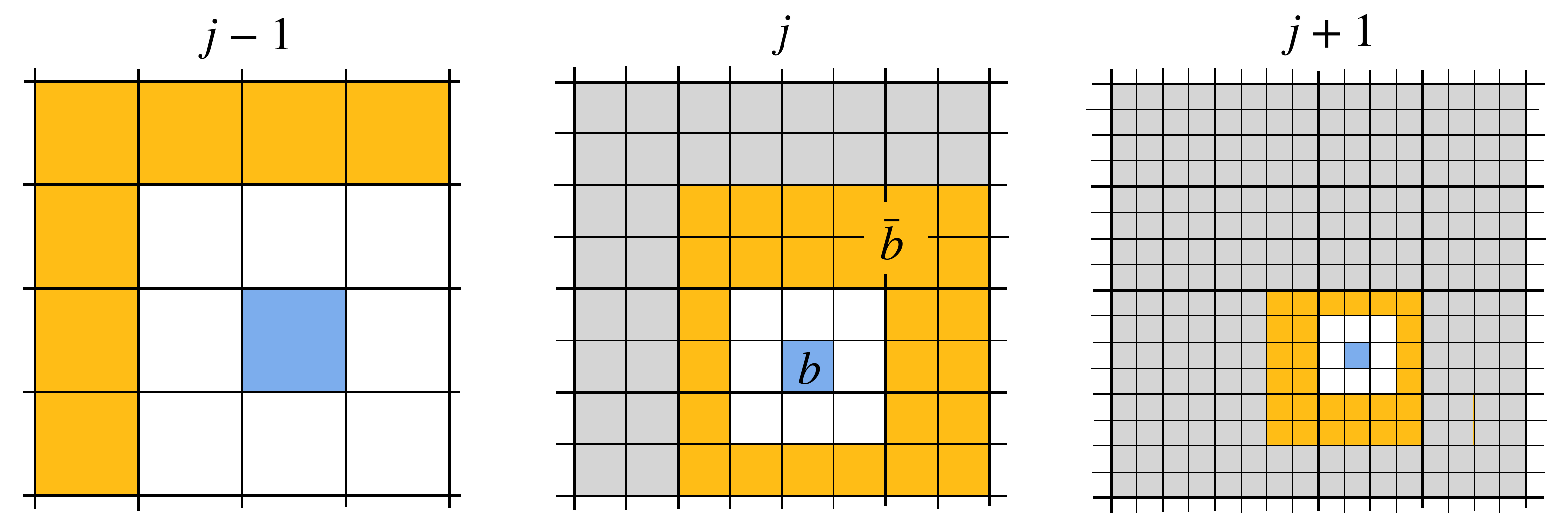}
	\end{subfigure}
	
	\vspace{0.2cm}
	\begin{subfigure}{\linewidth}
		\centering
		\includegraphics[width=\linewidth]{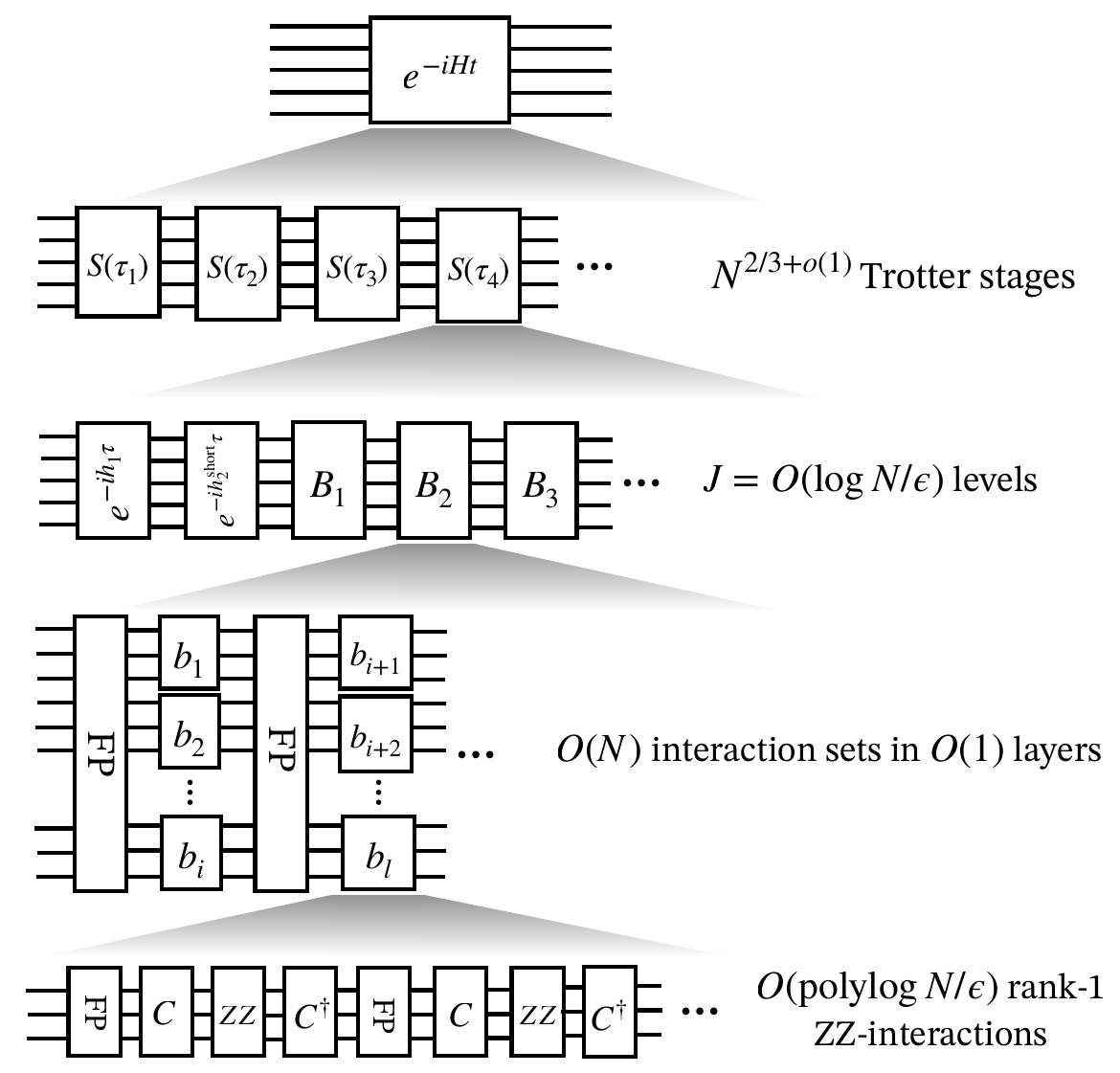}
	\end{subfigure}
	\caption{Top: Hierarchical decomposition in 2D. Bottom: Structure of the simulation algorithm.}
	\label{fig}
\end{figure}

Since an interaction $h_b$ only involves a constant number of nearby boxes in $B_j$, the layout of interactions at a given level $j$ can be parallelized to a constant number of layers.
We apply a fermionic permutation before each layer such that for all $h_b$ in the layer, the orbitals taking part in orbital pairs in $b$ appear in a sequence without gaps in the fermionic ordering (and the same for $\bar{b}$).
This way, all fermionic operations on orbitals in $h_b$ only involve qubits belonging to $b$ and $\bar{b}$.
Let $v$ and $w$ be 1-sparse 1-electron operators resulting from the decomposition discussed before.
The evolution $e^{-ivw t}$ serves as the building block for Trotterizing the long-range interactions.
Within $v$ and $w$, each orbital is interacting with only one other orbital, or with itself.
With a fermionic permutation all such pairs $(p,q)$ in $b$ can be put in consecutive order, $q=p+1$. 
In the Jordan--Wigner encoding, $v$ takes the form
\begin{align*}
	\sum_{p< q} \frac{v_{pq}}{2} (X_p X_q + Y_p Y_q) + \sum_p \frac{v_{pp}}{2} (I - Z_p).
\end{align*}
Since $C(X_pX_q + Y_pY_q) C^{\dag} = Z_p - Z_q$ for a Clifford gate $C$, $v$ can be mapped to $\alpha_0I + \sum_p \alpha_p Z_p$ for some $\alpha_p \in \RR$ (and similar for $w$), and we have 
\begin{align*}
	vw \to \sum_{p\in b} \alpha_p\beta_0 Z_p + \sum_{q\in \bar{b}} \alpha_0\beta_q Z_q + \sum_{p\in b,q\in\bar{b}} \alpha_p \beta_q Z_pZ_q.
\end{align*}
Here we ignored the identity term, and the summation over $p\in b$ indicates all orbitals that are involved with at least one orbital pair $(p,q)$ in $b$.
The first two terms are implemented by one layer of single-qubit rotations.
We are left with the $ZZ$-interaction with coefficient matrix of rank one.
Previous work states that a Trotter step for time $t$ of such a rank-one $ZZ$-interaction on $\nsys$ qubits can be implemented with $\bigO(\nsys \polylog(\nsys t))$ gates and $\bigO\big(\log\frac{\nsys t}{\eps}\big)$ ancillas \cite{low_complexity_2023}.
An explicit circuit is not described in that work, but conventional quantum arithmetic techniques achieve this gate count.
However, no claim about the circuit depth is made.
We close this gap and show that a rank-one $ZZ$-interaction can be implemented in polylogarithmic depth in the number of qubits:
\begin{lemma}\label[lemma]{lemma:rank1 dirty ancillas}	
	Consider operators $v = \sum_{p\in[N_b]} \alpha_pZ_p$, $w = \sum_{q\in [N_{\bar{b}}]} \beta_q Z_{N_b + q}$ such that $\abs*{\alpha_p\beta_q} < \mathrm{const.}$ and $N_b,N_{\bar{b}} = \bigO(\nsys)$.
	Then $e^{-ivwt}$ can be implemented to precision $\eps$ in depth $d_{ZZ} = \bigO(\log\frac{\nsys t}{\eps}\log\nsys)$ using $\bigO(\log\frac{\nsys t}{\eps})$ dirty ancilla qubits.
\end{lemma}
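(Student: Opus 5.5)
Writing $vw = \sum_{p,q}\alpha_p\beta_q Z_pZ_{N_b+q}$ and using that all terms commute, $e^{-ivwt}$ is diagonal in the computational basis. On a basis state $\ket{x}$ it multiplies by the phase $e^{-it\,a(x)b(y)}$, where $a(x)=\sum_p \alpha_p(-1)^{x_p}$ and $b(y)=\sum_q\beta_q(-1)^{y_q}$ are classical weighted sums of the two registers. The plan has three steps:
\begin{enumerate}
\item Compute a fixed-point approximation of $a(x)$ into a register of $D=\bigO(\log\frac{\nsys t}{\eps})$ qubits.
\item Apply the phase $e^{-it\,a\,b(y)}$ controlled on that register by ``phase kickback'' directly onto the $y$ qubits.
\item Uncompute the register.
\end{enumerate}

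The rounding is chosen as follows. Round each $\alpha_p$ (after rescaling by $\max|\alpha|$ and absorbing the scale into $\beta$ and $t$) to precision $2^{-D'}$. This gives $\tilde a(x)$ with $|a-\tilde a|\le \nsys 2^{-D'}$, so the phase error is at most $t\,\nsys\,2^{-D'}\max_y|b(y)|$. This is $\bigO(t\nsys^2 2^{-D'})$ under the hypotheses, and so $D'=\bigO(\log\frac{\nsys t}{\eps})$ suffices. The range of $\tilde a$ then needs only $D'+\bigO(\log\nsys)$ bits.

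For step 2, write $\tilde a=\sum_k 2^k a_k$ with bits $a_k$. Then
\begin{align*}
e^{-it\tilde a b(y)}=\prod_k\prod_q e^{-i t 2^k\beta_q a_k Z_{N_b+q}},
\end{align*}
which is a product of two-qubit controlled-$Z$-rotations. For a fixed bit $a_k$ these act on disjoint targets $q$ and commute, but they share the control qubit $a_k$. I would fan out each control to depth $\bigO(\log\nsys)$ using CNOT trees. Alternatively, I would organize the rotations as $D$ rounds in which each round pairs bit $k$ with a disjoint block of targets, cycling through the targets, so the whole step has depth $\bigO(D\log\nsys)$.

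The main work is step 1: computing $\tilde a(x)$ into $D$ ancillas in depth $\bigO(D\log\nsys)$ while the ancillas are \emph{dirty} (arbitrary unknown initial state). I would use quantum Fourier arithmetic. Apply $\qft$ to the $D$-qubit register (depth $\bigO(D)$ up to the precision allowed, or approximate $\qft$ of depth $\bigO(D\log D)$). In Fourier space, adding $\sum_p \tilde\alpha_p(-1)^{x_p}$ modulo $2^D$ amounts to the phase rotations
\begin{align*}
\prod_{p}\prod_{k} e^{2\pi i\, \tilde\alpha_p (-1)^{x_p} 2^{k}\hat n_k/2^D}.
\end{align*}
Each factor is a controlled phase between $x_p$ and Fourier qubit $k$, so this is the same bipartite ``$ZZ$ with product coefficients'' pattern as step 2. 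It is schedulable in depth $\bigO(D\log\nsys)$ by the same fan-out or rotation-cycling argument, with no carry propagation. Then apply $\qft^\dagger$. Because Fourier addition is a permutation $\ket{c}\mapsto\ket{c+\tilde a(x)}$ for any initial value $c$, the register holds $c+\tilde a(x)$ for an unknown $c$.

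To handle the dirty offset, I would use the standard trick: apply step 2 with $c+\tilde a$, then subtract $\tilde a$, apply step 2 again with the opposite sign to remove the $c$-dependent phase, and add $\tilde a$ back. The net phase is $e^{-it\tilde a b}$, and modular wraparound is harmless provided $2^D$ times the minimal phase unit is a multiple of $2\pi$. I expect the obstacle here to be arranging this so that the $c$-contributions cancel exactly modulo $2^D$. It may require choosing $t$-dependent phase units so that $e^{-it2^{D}u\,b}$ is trivial, or else computing into a register of size $2D$ with careful modulus. The remaining step is to check that the total error from rounding the $\alpha_p$, the $\beta_q$, and the approximate $\qft$ sums to at most $\eps$. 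I would do this by the triangle inequality, giving total depth $\bigO(\log\frac{\nsys t}{\eps}\log\nsys)$.
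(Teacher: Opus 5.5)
Your plan is correct in outline and takes a leaner route than the paper. Three points need fixing before it is a proof; they are listed after the comparison.

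\textbf{Comparison with the paper.} The paper proceeds as follows:
\begin{itemize}
\item It rewrites $Z_pZ_q$ via occupation bits $(I-Z_p)/2$.
\item It rounds \emph{both} coefficient vectors to integers $a_p=\lfloor 2^{\lowprecbits}\sqrt{2t/\pi}\,\alpha_p\rfloor$ and $b_q$ likewise, so the target phase is $2\pi(x\cdot a)(y\cdot b)/2^{2\lowprecbits}$.
\item It uses \emph{two} dirty registers of $\nanc\ge 2\lowprecbits$ qubits. It Fourier-adds $x\cdot a$ into one and $y\cdot b$ into the other, then applies the product phase with $\nanc^2$ controlled phases between the registers, then uncomputes.
\item It cancels the dirty offsets $v,w$ with three further product phases, for $(v+x\cdot a)w$, $v(w+y\cdot b)$ and $vw$.
\end{itemize}
You instead keep the $\pm1$ eigenvalues (negative sums are harmless modulo $2^n$), use a single register, and kick the phase back directly onto the $y$ qubits. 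For each register bit, this kickback is the paper's multi-controlled-phase gadget with that bit as control and $y$ as the data register, up to a single-qubit phase. So it costs the same $\bigO(\nanc\log\nsys)$ depth as one Fourier addition. You thereby halve the ancillas and replace four product phases by two kickbacks. The paper's version is symmetric in $x,y$ and confines the product step to the $\bigO(\nanc)$ register qubits.

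\textbf{Points to fix.}

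(i) The commensurability you flag is essential, not a technicality. After the addition the dirty register holds $c+\tilde a-k2^n$, with an integer $k$ that depends on the unknown $c$. If $b(y)$ stays real, the leftover factor $e^{itu\,k2^n b(y)}$ is an $\bigO(1)$, $c$-dependent error that does not cancel. The fix has three parts:
\begin{itemize}
\item Round $\beta_q$ too, on a common scale: $a_p\approx 2^{\lowprecbits}\sqrt{t/(2\pi)}\,\alpha_p$ and $b_q\approx 2^{\lowprecbits}\sqrt{t/(2\pi)}\,\beta_q$.
\item The kicked-back phase is then $2\pi R\,B(y)/2^{2\lowprecbits}$, with integer $B(y)=\sum_q b_q(-1)^{y_q}$.
\item Take register size $n\ge 2\lowprecbits$.
\end{itemize}
With this, wraparound changes the phase by multiples of $2\pi$, and the two kickbacks combine to exactly $e^{-2\pi i\tilde a B/2^{2\lowprecbits}}$.

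(ii) You have no clean qubits, so a ``CNOT tree'' cannot copy a register bit $g$ onto fresh qubits. Instead realize $\prod_q e^{-i\phi_q Z_gZ_q}$ as $F\,(\prod_q e^{-i\phi_qZ_q})\,F$. Here $F$ is the ancilla-free, $\bigO(\log\nsys)$-depth XOR fan-out of $g$ into the data qubits themselves; this is what the paper's log-depth multi-controlled phase circuit does. Your ``rotation-cycling'' alternative cannot give polylogarithmic depth. Each register bit must still gate all $N_{\bar b}$ targets, which takes $\Omega(N_{\bar b})$ layers without fan-out.

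(iii) Drop the final ``add $\tilde a$ back.'' After the subtraction the register is already $c$. Adding again would leave it holding $c+\tilde a$, entangled with $x$. The correct sequence is: add, kick back, subtract, then inverse kickback with the register at $c$.
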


The proof and the quantum circuit behind it are detailed in the Appendix.
By reducing the number of parallel interactions $h_b$ per layer by a constant factor, we have sufficiently many idle system qubits, so that no additional qubits are required. 
There are at most $(\degexp+1)^2(\Delta+2)^2$ such rank-1 interactions per $h_b$, and the number of layers per level $j$ is bounded by a constant, hence the total depth for a Trotter stage of $h_{2}^{\mathrm{long}}$ is $\bigO(J\degexp^2\Delta^2 (d_{ZZ} + d_{\mathrm{FP}}))$.
The remaining terms, $h_1$ and $h_2^{\mathrm{short}}$, each contain only $\tilde\bigO(\nsys)$ terms.
Their Trotter stages can be implemented in polylogarithmic depth using an analogous decomposition into 1-sparse 1-electron interactions, skipping the multipole expansion step.
We arrive at a Trotter stage implementation with significantly lower depth than previous methods.

\begin{theorem}\label[theorem]{thm:trotter step}
	For $H\in\mathrm{MESH}(\zeta,\beta,\localsize,\delta,A)$ with $\nsys$ orbitals one can implement a Trotter stage for time $t$ to precision $\eps>0$ in depth $\bigO\big(\polylog\frac{\nsys t}{\eps}\big)$.
\end{theorem}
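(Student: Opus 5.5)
The plan is to write a Trotter stage as a product of exponentials of three groups of terms, $h_1$, $h_2^{\mathrm{short}}$ and $h_2^{\mathrm{long}}$. Each group is implemented as a product of polylogarithmically many layers, and each layer has polylogarithmic depth. The total precision $\eps$ is shared among the truncation of orbital supports, the truncation of the multipole expansion, and the finite-precision rotations. Since each error source is at most $\bigO(\eps/\mathrm{poly})$ and there are only polynomially many of them, every parameter stays $\bigO(\log\frac{\nsys t}{\eps})$.

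\textbf{Step 1: truncation.} First I would truncate the orbitals to radius $\rupper=\bigO(\log\frac{\nsys}{\eps'})$. By \Cref{thm:lowdin} the decay parameters are independent of $\nnuc$. The resulting coefficient error in $H$ is at most $\eps'$, so the evolution error is at most $t\eps'$. I choose $\eps'=\eps/(3t)$. The maximal degree of the overlap graph is then $\Delta=\bigO(\polylog\frac{\nsys t}{\eps})$.

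\textbf{Step 2: the long-range part.} I apply \Cref{lemma:hb multipole} with precision $\eps/(3t)$, which gives $\degexp=\bigO(\log\frac{\nsys t}{\eps})$. The number of levels is $J=\bigO(W)=\bigO(\log \nsys)$; this uses the minimal distance $\delta$, which forces the box side $2^W$ to be at most $\mathrm{poly}(\nsys)$. At each level I would choose a constant coloring of the boxes, so that within a color class the sets $b\cup\bar b$ are disjoint. Disjointness holds because $\bar b$ lies within a constant-size neighborhood of $b$. I then apply a fermionic permutation (depth $d_{\mathrm{FP}}=\bigO(\log^2\nsys)$) to make the orbitals of each $b$ and each $\bar b$ contiguous in the Jordan--Wigner ordering. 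After that, the $h_b$ in one color class act on disjoint qubit sets without Jordan--Wigner strings crossing between them, so they can run in parallel.

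Each $h_b^{(\degexp)}$ is expanded into at most $(\degexp+1)^2(\Delta+2)^2$ products $vw$ of 1-sparse one-body operators, obtained from the Vizing edge colorings. For each product I do the following:
\begin{enumerate}
\item apply a further fermionic permutation that makes the matched pairs adjacent;
\item apply the Clifford layer mapping $v$ and $w$ to $Z$-sums;
\item apply single-qubit rotations for the linear terms;
\item apply \Cref{lemma:rank1 dirty ancillas} for the $ZZ$ part, with precision $\eps/(3\cdot\#\mathrm{gates})$.
\end{enumerate}
The dirty ancillas are borrowed from qubits left idle by thinning each color class by a constant factor. The coefficient bound $|\alpha_p\beta_q|=\bigO(1)$ needed in \Cref{lemma:rank1 dirty ancillas} has to be checked. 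It follows from $|h_{pqrs}|=\bigO(1)$ together with the normalization of the spherical-harmonic expansion terms. This check is the step I expect to be delicate: multipole coefficients scale like $d_j^{\ell}$ and $d_j^{-\ell-1}$, so I would rescale $v_b^{\ell m}$ and $w_b^{\ell m}$ by reciprocal factors so that each product $\alpha_p\beta_q$ stays bounded.

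\textbf{Step 3: assembling the depth.} The depth for $h_2^{\mathrm{long}}$ is $\bigO(J\degexp^2\Delta^2(d_{ZZ}+d_{\mathrm{FP}}))$, which is polylogarithmic. I would Trotterize the products within a stage in any order. Reordering only affects Trotter error, which is not part of the "stage" precision. Each exponential $e^{-ivwt}$ is implemented to the required precision, so the errors add linearly over the polylogarithmic number of factors.

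For $h_1$ and $h_2^{\mathrm{short}}$ the terms are local. I would group them by nearby boxes and edge-color the interaction graph with $\bigO(\Delta^2)$ colors. Products $T_{pq}T_{rs}$ of two 1-sparse operators within a local region are then treated the same way as above, with an exact rank-one $ZZ$ step (or directly via Clifford conjugation), and no multipole step is needed. Summing the three contributions gives depth $\bigO(\polylog\frac{\nsys t}{\eps})$ and total error at most $\eps$.
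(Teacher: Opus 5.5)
\paragraph{Gap: the bound on the number of levels.}
Your construction is the paper's: truncation to radius $\rupper$, FMM-style levels with a constant number of parallel layers per level, Vizing coloring into 1-sparse operators, Clifford diagonalization, and \Cref{lemma:rank1 dirty ancillas} between fermionic permutations. The step that fails is the bound on the number of levels. You argue that the minimal distance $\delta$ forces the cube side $2^W$ to be $\mathrm{poly}(\nsys)$, so $J=\bigO(\log\nsys)$. But $\delta$ is only a lower bound on internuclear distances. It caps the density, and nothing in $\mathrm{MESH}(\zeta,\beta,\localsize,\delta,A)$ bounds the diameter of the configuration. For example, nuclei at $\delta 2^k$ for $k=1,\dots,\nnuc$ on a line form an admissible configuration. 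Then $W=\Theta(\nsys)$, and at $\Theta(\nsys)$ distinct levels there is a nonempty $h_b$ coupling the atom at $\delta 2^k$ to the atoms near the origin. Your depth bound $\bigO(J\degexp^2\Delta^2(d_{ZZ}+d_{\mathrm{FP}}))$ is then linear in $\nsys$. These levels cannot be run in parallel, because every one of them acts on the qubits of the atoms near the origin. With larger gaps, $J$ is not bounded by any function of $\nsys$ at all.

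\paragraph{Repair.}
The missing step is a second truncation, this time of very-long-range interactions; the paper handles this case the same way.
\begin{itemize}
\item By \Cref{lemma:coeff decay}, $\abs{h_{pqrs}}=\bigO(1/\abs{r_I-r_J})$, and there are only $\tilde\bigO(\nsys^2)$ long-range terms. So you can drop every term whose orbital pairs are farther apart than $D=\tilde\bigO(\nsys^2t/\eps)$, at a cost of $\bigO(\eps/t)$ in coefficient $1$-norm.
\item Split the nuclei into the connected components of the graph that joins nuclei closer than $D$. Each component has diameter at most $\nsys D=\mathrm{poly}(\nsys t/\eps)$, so its own hierarchy has $J=\bigO(\log\frac{\nsys t}{\eps})$ levels.
\item Components no longer interact: one-body terms only couple nuclei within $2\rupper<D$, and cross-component two-body terms were dropped. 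After a fermionic permutation that makes each component contiguous, they can be simulated in parallel.
\end{itemize}
With this repair the rest of your argument goes through.

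\paragraph{Two minor slips.}
First, a stage contains polynomially many factors $e^{-ivwt}$, not polylogarithmically many. Your per-factor precision $\eps/(3\cdot\#\mathrm{gates})$ already accounts for this, so nothing breaks. Second, the bound $\abs{\alpha_p\beta_q}=\bigO(1)$ does not follow from $\abs{h_{pqrs}}=\bigO(1)$. It follows from $\abs{r_1}^{\ell}/\abs{r_2}^{\ell+1}\le(8/11)^{\ell}\cdot 8/(11d_j)$ with $d_j\ge 8\rupper$, together with $\abs{Y_{\ell m}}\le 1$ and $\int\abs{\orb{p}\orb{q}}\le 1$.
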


Higher-order Trotter product formulas $S_{k}(t)$ are constructed by sequences of Trotter stages with varying time steps.
To analyze the Trotter error, we rely on a result by Childs et al. \cite{childs_theory_2021} that gives an upper bound based on the commutator structure of the Hamiltonian terms.

\begin{theorem}\label[theorem]{thm:trotter error}
	Let $H\in\mathrm{MESH}(\zeta,\beta,\localsize,\delta,A)$ with $\nsys$ orbitals, then the error $\norm{e^{-iHt} - S_{k}(t)}$ for the decomposition of $H$ underlying \Cref{thm:trotter step} is at most
	\begin{align*}
		\bigO\big(8^k(k+1)!t^{k+1}\nsys\log^{k+1}\nsys\big) \quad&\text{for}\quad d=1\\
		\bigO\big(8^k(k+1)!t^{k+1}\nsys^{(k+1)(\frac{d-1}{d}) + 1}\big)\quad &\text{for}\quad d>1.
	\end{align*}
\end{theorem}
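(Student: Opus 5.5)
We invoke the commutator-scaling bound of Childs et al.\ \cite{childs_theory_2021}. For a product formula $S_k(t)$ built from $\Upsilon$ stages over terms $H=\sum_{\gamma=1}^{\Gamma}H_\gamma$, it gives
\begin{align*}
\norm{e^{-iHt}-S_k(t)} = \bigO\big(\tilde\alpha_{\mathrm{comm}}\, t^{k+1}\big),
\end{align*}
where
\begin{align*}
\tilde\alpha_{\mathrm{comm}}=\sum_{\gamma_1,\dots,\gamma_{k+1}}\norm{[H_{\gamma_{k+1}},\dots,[H_{\gamma_2},H_{\gamma_1}]]}.
\end{align*}
The factor $8^k(k+1)!$ will come from the number of stages ($\Upsilon\sim 2\cdot5^{k/2-1}$ for Suzuki formulas, absorbed into $8^k$) together with the combinatorics of nested commutators. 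So the task is to bound the nested-commutator sum for the specific decomposition of \Cref{thm:trotter step}. That decomposition has three parts: $h_1$; $h_2^{\mathrm{short}}$, split into 1-sparse pieces; and the long-range pieces $v_b^{\ell m}w_b^{\ell m}$, split into rank-one products of 1-sparse one-electron operators.

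The key structural input is locality. Each term acts on orbitals inside a region determined by its box, and two terms fail to commute only if their orbital supports overlap. I would therefore bound nested commutators inductively, in the spirit of Childs et al.\ for local Hamiltonians. Fix $H_{\gamma_1}$. At step $j$, the operator $[H_{\gamma_j},\dots,H_{\gamma_1}]$ is supported on the union of the supports of the terms used so far, and its norm is at most $2^{j-1}\prod_i\norm{H_{\gamma_i}}$. The next term must overlap this support. Summing over the admissible $H_{\gamma_{j+1}}$ then gives a factor equal to the total norm of terms touching the current support. This produces a bound of the form
\begin{align*}
\sum_{\gamma_1}\norm{H_{\gamma_1}}\prod_{j=1}^{k} (\text{norm of terms touching a region of } j \text{ terms}),
\end{align*}
and the factor $(k+1)!$ comes from the growth of the supports.

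The main work, and the main obstacle, is bounding the ``touching norm'' for the long-range terms. A box $b\in B_j$ at a coarse level has $\sim 2^{-dj}\nsys$-scale support, and many boxes at all levels touch any given orbital. Two facts control the weight. First, by the multipole decay in \Cref{lemma:hb multipole}, $\norm{h_b}$ is roughly (number of orbitals in $b$)$\times$(number in $\bar b$)$/d_j$. Second, single-electron-operator norms $\norm{v_b^{\ell m}}$ scale with the support size. The total norm of level-$j$ terms touching a fixed orbital should then be about $N_j^2/d_j$ with $N_j\sim d_j^d$ (using $\Theta(1)$ orbital density in $d$ dimensions), normalized per orbital. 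Summing over levels $j\le J$ would give $\sum_j d_j^{d-1}\sim\nsys^{(d-1)/d}$ for $d>1$, where the geometric sum is dominated by the top level. For $d=1$ each level contributes $\bigO(1)$, giving $J=\bigO(\log\nsys)$. The short-range terms and $h_1$ contribute only $\polylog$ per orbital, through $\Delta$ and $\degexp$, which I would absorb.

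Combining these, each of the $k$ nested factors contributes $\nsys^{(d-1)/d}$ (or $\log\nsys$ for $d=1$). The outer sum over $\gamma_1$, normalized per orbital, contributes an extra $\nsys$. This yields $\nsys^{(k+1)(d-1)/d+1}$ for $d>1$ and $\nsys\log^{k+1}\nsys$ for $d=1$, where the extra factors of $(d-1)/d$ and $\log\nsys$ come from the first-term norm. I expect the delicate point to be making the norm of a commutator with a large-support long-range term scale with the overlap rather than with the full support. I would first try using fermion-number structure: for one-electron operators $v,v'$, $[v,v']$ is again one-electron with coefficient matrix $[V,V']$, whose norm is controlled by the overlapping block. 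If this does not give the stated exponents, I would fall back on a cruder operator-norm bound on the relevant particle-number sector.
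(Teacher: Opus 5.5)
\textbf{There is a genuine gap.} Invoking the Childs et al.\ commutator bound is the right start. The gap is in bounding the nested commutators, exactly at the step you flag as delicate.

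\emph{Why the support-overlap scheme fails.} You keep the coarse Trotter terms $v_b^{\ell m\sigma}w_b^{\ell m\rho}$ as the $H_\gamma$ and bound nested commutators by support overlap times operator norms. This bookkeeping cannot give the stated exponents. A coarse-level pair $b\cup\bar b$ contains $\Theta(\nsys)$ orbitals. Already the monopole part of a top-level $h_b$ (a number operator on $b$ times a Coulomb potential on $\bar b$) has norm of order $\nsys^2/d_1\sim\nsys^{(2d-1)/d}$ for a homogeneous configuration, up to polylogarithmic factors. Once such a term enters the nest, the accumulated support is essentially the whole system and every remaining term touches it. Each further nesting then costs a factor of order $\sum_\gamma\norm{H_\gamma}$, not $\nsys^{(d-1)/d}$. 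Even at $k=1$, pairing the $\tilde\bigO(\nsys)$ short-range terms with top-level long-range terms gives $\tilde\bigO(\nsys^{1+(2d-1)/d})$, a factor $\nsys^{1/d}$ above the claim.

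Your step ``normalized per orbital'' silently replaces the full norm of a touching term by its per-orbital share. That share is the right quantity, but no support-overlap argument produces it, and your fallback (a cruder norm bound on a particle-number sector) only loses more. The observation that $[v,v']$ has coefficient matrix $[V,V']$ points in the right direction. However, your terms are products $vw$, and carrying this through $k$ nestings requires tracking $W^{(k)}$ as a combination of fermionic monomials with a controlled coefficient norm, which your scheme never sets up.

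\emph{The missing idea is to drop the coarse terms and work with elementary monomials.}
\begin{itemize}
\item Within $v_b^{\ell m\sigma}w_b^{\ell m\rho}$ all pieces $v_{pq}w_{rs}T_{pq}T_{rs}$ commute: a matching uses disjoint orbital pairs, and $b$, $\bar b$ involve disjoint orbitals. So $e^{-itvw}$ factors exactly into elementary exponentials, and the commutator bound may be evaluated on the terms $h_{pq}T_{pq}$ and $T_{pq}T_{rs}$.
\item The repeated occurrences of $T_{pq}T_{rs}$ over $\ell,m,\sigma,\rho$ are merged by linearity into one weight $\abs{h'_{pqrs}}$. Uns\"old's theorem and the geometric series in $\abs{r_1}/\abs{r_2}<8/11$ bound this weight by $24$ times the Coulomb integral of $\abs{\orb{p}\orb{q}}$ against $\abs{\orb{r}\orb{s}}$.
\item By induction, $W^{(k)}$ is a sum of products of operators from $\{T_{pq}\}\cup\{S_{pq}\}$, all of norm $1$, so the coefficient $1$-norm bounds the operator norm. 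No particle-number sectors are needed.
\item By the $T$/$S$ commutation relations, commuting with $T_{pq}$ (resp.\ $T_{pq}T_{rs}$) hits each of the at most $k+1$ factors of $W^{(k-1)}$ only through an index coincidence, producing at most $4$ (resp.\ $8$) terms.
\item Hence each nesting multiplies the $1$-norm by at most $8(k+1)\max_p\sum_{qrs}\abs{h'_{pqrs}}$. By \Cref{lemma:mesh norm} this is $\bigO(\nsys^{(d-1)/d})$, or $\bigO(\log\nsys)$ for $d=1$. The induction starts from $\norm{w^{(0)}}_1=\bigO(\nsys^{(2d-1)/d})$, resp.\ $\bigO(\nsys\log\nsys)$.
\end{itemize}
This is also where $8^k(k+1)!$ comes from: the commutator combinatorics, not the number of Suzuki stages. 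It also makes your level-by-level sum $\sum_j d_j^{d-1}$ unnecessary, since the per-orbital Coulomb sum is handled directly by \Cref{lemma:aux}.
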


Approximating $e^{-iHt}$ by a sequence of $t/\tau$ Trotter steps $S_{k}(\tau)$ of short step sizes $\tau$ allows us to reduce the Trotterization error arbitrarily \cite{childs_theory_2021}.
The error decreases faster for higher-order formulas, but the increase in circuit complexity per step is exponential in $k$.
The optimal Trotter order is a function of $\nsys, t$ and $\eps$, and together with \cref{thm:trotter step} this yields the main result of our work:
\begin{theorem}\label[theorem]{thm:main result}
	Let $H\in\mathrm{MESH}(\zeta,\beta,\localsize,\delta,A)$ with $\nsys$ orbitals, then $e^{-iHt}$ can be implemented to precision $\eps>0$ with $\nsys^{\frac{2d-1}{d}}t\left(\nsys t/\eps\right)^{\smallo(1)}$ gates in depth $\nsys^{\frac{d-1}{d}}t(\nsys t/\eps)^{\smallo(1)}$ and without ancillas.
\end{theorem}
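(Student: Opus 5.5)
The plan is to combine \Cref{thm:trotter step} and \Cref{thm:trotter error} in the standard way: split $[0,t]$ into $r$ steps of size $\tau=t/r$, choose $r$ so the accumulated Trotter error is at most $\eps/2$, and then optimize the order $k$ as a slowly growing function of $\nsys t/\eps$.

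First, by the triangle inequality, $\norm{e^{-iHt}-S_k(\tau)^r}\le r\,\norm{e^{-iH\tau}-S_k(\tau)}$. For $d>1$, \Cref{thm:trotter error} bounds the right-hand side by
\begin{align*}
C\,r\,8^k(k+1)!\,\tau^{k+1}\nsys^{(k+1)\frac{d-1}{d}+1}.
\end{align*}
Demanding this be at most $\eps/2$ gives
\begin{align*}
r=\bigO\Big(8^{k/(k+1)}\big((k+1)!\big)^{1/(k+1)}\, t\,\nsys^{\frac{d-1}{d}}\,\big(\nsys t/\eps\big)^{1/k}\Big).
\end{align*}
For $d=1$ the same computation yields $r=\bigO\big(t\log\nsys\,(\nsys t/\eps)^{1/k}\cdot\mathrm{poly}(k)\big)$, which is consistent with the exponent $\frac{d-1}{d}=0$.

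Next, each step $S_k(\tau)$ consists of $\bigO(5^{k})$ stages, or a comparable count depending on the Suzuki recursion. By \Cref{thm:trotter step}, each stage is implemented to precision $\eps'$ in depth $\polylog(\nsys t/\eps')$. Choosing $\eps'=\eps/(2\cdot 5^k r)$ keeps the total implementation error below $\eps/2$ and changes the per-stage depth only polylogarithmically. The total depth is therefore
\begin{align*}
5^k\,r\,\polylog(\nsys t/\eps)=t\,\nsys^{\frac{d-1}{d}}\cdot 5^k\,8\,(k+1)\,(\nsys t/\eps)^{1/k}\,\polylog(\nsys t/\eps).
\end{align*}
Taking $k=\Theta\big(\sqrt{\log(\nsys t/\eps)}\big)$ balances the factor $5^k$ against $(\nsys t/\eps)^{1/k}$. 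Both factors become $e^{\bigO(\sqrt{\log(\nsys t/\eps)})}=(\nsys t/\eps)^{\smallo(1)}$, and so does $\polylog$. This gives depth $\nsys^{\frac{d-1}{d}}t(\nsys t/\eps)^{\smallo(1)}$. The gate count follows because every layer acts on at most $\bigO(\nsys)$ qubits, so gates $\le \nsys\cdot$depth $=\nsys^{\frac{2d-1}{d}}t(\nsys t/\eps)^{\smallo(1)}$.

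Ancilla-freeness comes from \Cref{thm:trotter step}. Its construction uses only the dirty ancillas of \Cref{lemma:rank1 dirty ancillas}, which are supplied by idle system qubits, and fermionic permutations without ancillas.

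The main obstacle is bookkeeping rather than ideas. The polylog in \Cref{thm:trotter step} must have a degree independent of $k$. The constants hidden in $\bigO(\cdot)$ in \Cref{thm:trotter error} must not depend on $k$ beyond the displayed $8^k(k+1)!$. The truncation errors ($r_{\max}$, $\degexp$) must be set at precision $\eps/\mathrm{poly}(t)$ so that they do not spoil the total error over time $t$. I would handle the last point by assigning a third error budget $\eps/3$ to replacing $H$ with its truncated multipole version. This costs only $\log(\nsys t/\eps)$ factors through $\Delta$, $\degexp$ and $J$, since $\norm{e^{-iHt}-e^{-i\tilde H t}}\le t\norm{H-\tilde H}$. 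One also needs $(k+1)!^{1/(k+1)}\le k+1$ to remain subpolynomial, which it does.
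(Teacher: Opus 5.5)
Your proposal is correct and takes the same route as the paper: combine the polylogarithmic-depth stage of \Cref{thm:trotter step} with the commutator bound of \Cref{thm:trotter error}, use $t/\tau$ steps, and optimize the order $k$. Your choice $k=\Theta(\sqrt{\log(\nsys t/\eps)})$, the three-way error budgeting, and the bound of gates by $\nsys$ times the depth supply exactly the details the paper leaves implicit; one small algebra slip is that solving $r^{-k}\,8^k(k+1)!\,t^{k+1}\nsys^{(k+1)\frac{d-1}{d}+1}\lesssim\eps$ gives the prefactor $8\,((k+1)!)^{1/k}$ and the factor $(t\nsys^{\frac{2d-1}{d}}/\eps)^{1/k}$ rather than your exponents $k/(k+1)$, $1/(k+1)$ and $(\nsys t/\eps)^{1/k}$, which is harmless because all of these are $(\nsys t/\eps)^{\smallo(1)}$ for your $k$.
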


\textit{Discussion} ---
In this work we formalized the molecular electronic-structure setting based on atomic orbitals, and we showed existence of a localized orthonormal basis under realistic conditions.
We presented Hamiltonian decompositions and quantum circuits that result in a Trotter time-evolution algorithm that, compared to all previous works using atomic orbitals, is cheaper in terms of total gate count, gate depth and the number of ancillas.
Similar to extensions of the FMM, our polylogarithmic-depth Trotter step implementation could be applied to other types of 2-body interactions between fermions or spins that are described by a well-behaved potential.
We have focused on the asymptotic cost scaling and did not investigate at what system size the improved scaling outweighs the overhead incurred by our algorithm. 
Such a study requires optimizations of constants, logarithmic factors and circuit constructions and is left for future work.
Our scaling of gate count and depth is determined by fundamental quantities of the problem: the dimension and the exponent of decay of the Coulomb potential.
An open question is whether further improvement is possible.  
Lastly, we note that sublinear-depth simulation with current qubitization methods is obstructed by their query depth $\Omega(\lambda t)$, where $\lambda$ is the normalization factor of the Hamiltonian obeying $\lambda \geq \norm{H}=\Omega(\nsys)$.

\textit{Acknowledgements} --- 
This work was funded by NSF grant PHY-2325080, the MIT-IBM Watson AI Lab, and the U.S. Department of Energy, Office of Science, National Quantum Information Science Research Centers, Quantum Systems Accelerator (Award No. DE-SCL0000121).

\bibliography{references}

\section{End Matter}\label{sec:end matter}
Because of the exponential decay property an atomic orbital $\ao{Ii}$ is Lipschitz-continuous, and together with the normalization condition this implies $\abs{\ao{Ii}(r)} \to 0$ for $\abs{r} \to \infty$.
By the gradient theorem, we thus have
\begin{align*}
	\ao{Ii}(r) = -\int_{0}^{\infty} \nabla \ao{Ii}\left(r + s\frac{r - r_I}{\abs{r - r_I}}\right)\cdot\frac{r-r_I}{\abs{r-r_I}}ds
\end{align*}
and therefore the decay on $\ao{Ii}$ itself,
\begin{align*}
	\abs{\ao{Ii}(r)} \leq \beta \zeta e^{-\zeta\abs{r - r_I}}\int_0^{\infty}e^{-\zeta s}ds = \beta e^{-\zeta\abs{r-r_I}}.
\end{align*}

The off-diagonal entries of the Gram matrix $S$ are bounded by
\begin{align*}
	\abs{S_{Ii,Jj}} &\leq \beta^2 \int e^{-\zeta\abs{r_I - r}} e^{-\zeta\abs{r_J - r}}dr\\
	 &\leq  \frac{8\pi\beta^2}{\zeta^3}e^{-\zeta \abs{r_I - r_J}/2}\\
\end{align*}
Assuming atomic orbitals are on-site orthogonal, we can use \cref{lemma:aux} and Gershgorin's theorem to bound the minimal eigenvalue of the Gram matrix $S$ by
\begin{align*}
	\lambda_{\min}(S) &\geq \min_{Ii} \left(1 - \sum_{Jj\neq Ii} \abs{S_{Ii,Jj}} \right)\\
	&\geq \min_{I}\left( 1 - \frac{8\pi \beta^2 \localsize}{\zeta^3} \sum_{J\neq I} e^{-\zeta\abs{r_I - r_J}/2} \right)\\
	&\geq 1 - \frac{16\pi \beta^2 \localsize}{\zeta^3} d! \left(\frac{12}{\delta\zeta}\right)^d e^{-\zeta \delta/4}.
\end{align*}

The truncation of orbitals to radius $\rupper$ amounts to a perturbation of the coefficients of $H$, which we bound now.
We obtain truncated orbitals from the original ones by multiplying by a smooth cutoff function, $\orb{p}'(r) = \orb{p}(r)\eta(\abs{r-r_P}/\rupper)$, where $\eta=1$ on $[0,1/2]$ and $\eta=0$ on $[1,\infty)$.
The tail $\orb{p}^{\mathrm{tail}} = \orb{p} - \orb{p}'$ decays as $e^{-\zeta\rupper/4}e^{-\zeta\abs{r-r_P}/2}$.
We expand $\orb{p} = \orb{p}' + \orb{p}^{\mathrm{tail}}$ in $h_{pq}$ and $h_{pqrs}$ and truncate all terms containing a tail.
The sum of truncated terms is $\bigO(e^{-\zeta\rupper/4}\nsys^{\frac{2d-1}{d}})$, where we used \Cref{lemma:mesh norm}, giving $\rupper = \bigO\left(\zeta^{-1} \log\frac{\nsys}{\eps}\right)$ for truncation error $\eps$.
 This threshold determines the maximal degree $\Delta$ of the orbital overlap graph $E$; the number of spheres of radius $\delta$ fitting into a sphere of radius $2\rupper$ is at most $(2\rupper/\delta)^d$, so we have $\Delta = \bigO\left((\frac{1}{\zeta\delta}\log\frac{\nsys}{\eps})^d\right)$.

Next, we provide details on how the number of levels $J$ of the hierarchical decomposition should be chosen.
Let $d_j$ be the box width at level $j$.
A homogeneous distribution of atoms gives $J = \bigO(\log \nsys)$, but if the system was composed of two subsystems at distance $\bigO(2^{\nsys})$ a naive bound would yield $J=\bigO(\nsys)$. 
However, we can truncate interactions between such subsystems; by \cref{lemma:coeff decay} a long-range term will be smaller than $\eps$ if it involves orbital pairs further apart than a distance $\bigO(\eps^{-1})$.
There are $\tilde\bigO\left(\nsys^2\right)$ long-range terms, so truncating all terms involving atomic orbitals further apart than $\tilde\bigO(\nsys^2\eps^{-1})$ leads to error at most $\tilde\bigO\left(\eps\right)$.
Consequently, the maximal relevant extent of the system is $\tilde\bigO\left(\nsys^3\eps^{-1}\right)$ and hence $J = \tilde\bigO(\log\frac{\nsys}{\eps})$.
Non-interacting subsystems can be treated independently, in particular our simulation algorithm can run on both subsystems in parallel.
At the finest level, the box width needs to satisfy $d_J > 2 \rupper$, as otherwise orbitals in $b$ and $\bar{b}$ can overlap.
We set $J$ to be the largest integer such that $d_J>8\rupper$, so the supports of interacting orbital pairs are guaranteed to be at least $3 d_j/4$ apart for any $j\in [J]$.

The multipole expansion for the Coulomb potential is 
\begin{align}\label{eq:multipole expansion infty}
	\frac{1}{\abs{r_1 - r_2}}\! =\! \sum_{\ell = 0}^{\infty}\; \sum_{\mathclap{m=-\ell}}^{\ell}\! \abs{r_1}^{\ell} Y_{\ell m}(\theta_1,\varphi_1)\!\frac{Y_{\ell m}(\theta_2,\varphi_2)}{\abs{r_2}^{\ell+1}}
\end{align}
where $Y_{\ell m}(\theta,\varphi)$ are real spherical harmonics (in Racah's normalization), and $(\abs{r}, \theta,\varphi)$ are spherical coordinates for $r$.
We expand around the center of $b$, so that $\abs{r_1} \leq \sqrt{3}d_j/2 + r_{\max} \leq d_j$ and $r_2 \geq 3 d_j/2 - r_{\max} \geq 11d_j/8$ at level $j$, where $d_j$ is the box width.
Truncating the expansion to $\degexp<\infty$ yields an approximation with a pointwise error of at most $\abs{\abs{r_1} - \abs{r_2}}^{-1}(\abs{r_1}/\abs{r_2})^{\degexp + 1} \leq 3(8/11)^{\degexp}/d_j$ \cite{greengard_new_1997}.
We define multipole moments for $b$ and $\bar{b}$:
\begin{align}\label{eq:defn multipole moments}
	v_{pq}^{\ell m}:= \int \orb{p}^*(r_1) \abs{r_1}^{\ell} Y_{\ell m}(\theta_1,\varphi_1) \orb{q}(r_1)dr_1\\
	w_{rs}^{\ell m}:= \int \orb{r}^*(r_2) \abs{r_2}^{-\ell-1} Y_{\ell m}(\theta_2,\varphi_2) \orb{s}(r_2) dr_2,
\end{align}
then the multipole operators of \cref{lemma:hb multipole} are
\begin{align}\label{eq:defn multipole ops}
	v_b^{\ell m}&:= \sum_{(p,q)\in E_b} v_{pq}^{\ell m} T_{pq}\\
	w_b^{\ell m}&:= \sum_{(r,s)\in E_{\bar{b}}} w_{rs}^{\ell m} T_{rs}.
\end{align}
With $\abs{Y_{\ell m}}\leq 1$ and $\int\abs{\orb{p}(r)\orb{q}(r)}dr\leq 1$ due to the Cauchy-Schwartz inequality we have $\abs{v_{pq}^{\ell m} w_{rs}^{\ell m}}\leq (8/11)^{\ell+1}/d_j \leq d_J = \bigO(1)$, hence the assumption on the coefficients of \Cref{lemma:rank1 dirty ancillas} is satisfied.
As explained in the main text, we can decompose further into 1-sparse 1-electron operators, which we denote here by $v_{b}^{\ell m \sigma}$ and $w_{b}^{\ell m \rho}$,
\begin{align}
	v_b^{\ell m} = \sum_{\sigma=1}^{\Delta + 2} v_{b}^{\ell m \sigma}\qquad
	w_b^{\ell m} = \sum_{\rho=1}^{\Delta + 2} w_{b}^{\ell m \rho}.
\end{align}

The decomposition of the Hamiltonian that is employed for the Trotter-step implementation reads
\begin{align}
	&H \approx \sum_{pq} h_{pq} T_{pq} + \frac12\sum_{pqrs} h_{pqrs}^{\mathrm{short}} T_{pq} T_{rs} \label{eq:hamil decomp}\\
	&+ \frac12 \sum_{j=1}^J \sum_{b\in B_j} \sum_{\ell=0}^{\degexp} \sum_{m=-\ell}^{\ell} \sum_{\sigma,\rho = 1}^{\Delta + 2} \sum_{\substack{(p,q)\in E_b\\(r,s) \in E_{\bar{b}}}} v_{pq}^{\ell m \sigma}  w_{rs}^{\ell m \rho} T_{pq} T_{rs}. \nonumber
\end{align}

The implementation of the long-range terms has been discussed in the main text; we explain the implementation of the short-range terms here.
Since $d_J \sim \Theta(\rupper)$, each remaining term in $h_2^{\mathrm{short}}$ acts locally in a neighborhood of size $\bigO(\rupper)$.
There are $\bigO(\nsys \rupper^{3d})$ such terms, and they can be parallelized in $\bigO(\rupper^{3d})$ layers, interleaved with fermionic permutations, resulting in polylogarithmic depth.
Lastly, $h_{1}$ is decomposed into $\Delta + 2$ 1-sparse 1-electron operators, and they are sequentially implemented by mapping each to a sum of single-qubit $Z$ operations as described in the main text. 

\onecolumngrid
\newpage
\section{Appendix}

We show related auxiliary lemmas concerning the Hamiltonian coefficients here.

\begin{lemma}\label[lemma]{lemma:aux}
	Let $\{r_1,r_2,\dots,r_{\nnuc}\}\subset \RR^3$ be in a d-dimensional configuration with $\min_{I\neq J}\abs{r_I - r_J}\geq \delta$, then
	\begin{align}
		&\max_I \sum_{\mathclap{\substack{J=1\\ \abs{r_I - r_J} \geq r'}}}^{\nnuc} e^{-\zeta\abs{r_I - r_J}} \leq 2d! \left(\frac{6}{\delta\zeta}\right)^d e^{-\zeta r'/2}  \qquad \forall \zeta > 0, \forall r' \geq 0\label{eq:sum exp} \\ 
		&\max_I \sum_{J\neq I}^{\nnuc} \frac{1}{\abs{r_I - r_J}}  = 
		\begin{cases}
			\bigO(\log\nnuc), \quad \text{if} \quad d=1\\
			\bigO(\nnuc^{\frac{d-1}{d}}), \quad \text{if} \quad d>1
		\end{cases}. \label{eq:sum coulomb}
	\end{align}
\end{lemma}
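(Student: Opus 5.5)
The plan is a shell-counting (packing) argument around the fixed nucleus $r_I$, applied to both sums. Since the points lie in a $d$-dimensional configuration with minimal separation $\delta$, balls of radius $\delta/2$ centered at the $r_J$ are disjoint. All balls with centers at distance in $[k\delta,(k+1)\delta)$ from $r_I$ lie in a $d$-dimensional annulus of radii $(k-\tfrac12)\delta$ and $(k+\tfrac32)\delta$. Comparing volumes bounds the number of such points by
\[
n_k\leq \frac{(k+\tfrac32)^d-(k-\tfrac12)^d}{(1/2)^d}\leq c_d\,(k+1)^{d-1}.
\]
For the exponential bound, the crude estimate $n_k\leq (2k+3)^d\leq 3^d(k+1)^d$ (ball of radius $(k+\tfrac32)\delta$) is enough. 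Both sums then reduce to one-dimensional sums over the shell index $k$.

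\textbf{First bound.} Only shells with $(k+1)\delta> r'$ contribute, and on shell $k$ we have $e^{-\zeta\abs{r_I-r_J}}\leq e^{-\zeta k\delta}$. I split this factor as $e^{-\zeta k\delta/2}\cdot e^{-\zeta k\delta/2}$. For a contributing term the true distance is at least $r'$, so one half-factor can be bounded by $e^{-\zeta r'/2}$ directly, using the actual distance rather than $k\delta$. This leaves
\[
3^d\sum_{k\ge0}(k+1)^d e^{-\zeta\delta k/2}.
\]
I bound this sum by the integral $\int_0^\infty (x+1)^d e^{-ax}\,dx$ with $a=\zeta\delta/2$, which gives roughly $d!\,a^{-d-1}$, or $d!\,(2/\zeta\delta)^d$ up to the factor $2$ when $a$ is not small.

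\textbf{Main obstacle.} Matching the stated constant $2d!(6/(\delta\zeta))^d$ requires care when $\zeta\delta$ is large or small. I would handle this by using the polylog-type bound $\sum_{k\ge0}(k+1)^d x^k\leq d!/(1-x)^{d+1}$ and $1-e^{-a}\geq a/(1+a)$. An alternative is to absorb the surplus power of $a$ into the spare exponential factor. Constants like $6=3\cdot 2$ then emerge from $3^d$ and the halved rate. I expect some slack in these constants, and that is acceptable.

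\textbf{Second bound.} Using the refined count, $\sum_{J\neq I}1/\abs{r_I-r_J}\leq \sum_{k=1}^{k_{\max}} c_d(k+1)^{d-1}/(k\delta)$, plus a bounded contribution from shell $0$, which contains at most $c_d$ points each at distance at least $\delta$. The worst case packs the $\nnuc$ points as close to $r_I$ as possible, since moving a point inward only increases its term. Filling shells greedily therefore gives the maximum, and the number of filled shells satisfies $\sum_{k\le k_{\max}}(k+1)^{d-1}\gtrsim \nnuc$, i.e.\ $k_{\max}=\bigO(\nnuc^{1/d})$. The resulting sum is $\sum_{k\le k_{\max}}k^{d-2}$. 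For $d=1$ this is $\bigO(\log\nnuc)$, and for $d>1$ it is $\bigO(k_{\max}^{d-1})=\bigO(\nnuc^{(d-1)/d})$.
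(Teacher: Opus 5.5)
\textbf{The gap is in the exponential bound \eqref{eq:sum exp}, and it is not a matter of constants.}

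You charge every shell with the full ball count $n_k\le 3^d(k+1)^d$, so you arrive at $3^d\sum_{k\ge0}(k+1)^d e^{-ak}$ with $a=\zeta\delta/2$. Your polylog bound is asymptotically sharp here: the sum is an Eulerian polynomial divided by $(1-e^{-a})^{d+1}$. So for small $\zeta\delta$ this quantity really is $\approx 3^d d!\,a^{-(d+1)}=2d!\,(6/(\delta\zeta))^d\cdot(\zeta\delta)^{-1}$. That is one power of $(\zeta\delta)^{-1}$ more than claimed, and the ratio is unbounded as $\zeta\delta\to0$.

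Absorbing the surplus into a ``spare'' exponential is not available. Both halves of $e^{-\zeta\abs{r_I-r_J}}$ are already spent on $e^{-\zeta r'/2}$ and $e^{-ak}$. In any case, a factor at most $1$ cannot absorb a divergent $1/a$.

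The surplus is an overcount of points. A ball count $(k+1)^d$ may only be paired with the \emph{increments} of the exponential (Abel summation); otherwise you must use a shell count $(k+1)^{d-1}$. You can repair this in either of two ways:
\begin{itemize}
\item Use your own refined bound $n_k\le c_d(k+1)^{d-1}$ here too. This gives $c_d(d-1)!\,(1-e^{-a})^{-d}$, with the correct $(\zeta\delta)^{-d}$ behaviour.
\item Argue as the paper does. Write $e^{-\zeta s}=\zeta\int_s^\infty e^{-\zeta r}\,dr$, so the sum equals $\zeta\int_0^\infty N(r)e^{-\zeta r}\,dr$ with $N(r)=\#\{J:\ r'\le\abs{r_I-r_J}\le r\}\le(3r/\delta)^d$ and $N(r)=0$ for $r<r'$. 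Bounding $e^{-\zeta r}\le e^{-\zeta r'/2}e^{-\zeta r/2}$ for $r\ge r'$ then yields exactly $\zeta\,(3/\delta)^d\,d!\,(2/\zeta)^{d+1}e^{-\zeta r'/2}=2d!\,(6/(\delta\zeta))^d e^{-\zeta r'/2}$.
\end{itemize}

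One caveat affects the paper too. The packing count is $(2r/\delta+1)^d$, which is below $(3r/\delta)^d$ only for $r\ge\delta$, so the argument needs $J=I$ excluded. At $r'=0$ the self-term alone equals $1$, and the stated inequality fails once $\delta\zeta$ is large. Your inclusion of shell $0$ correctly accounts for this. For $r'>0$, however, that shell is empty and must be dropped, or your bound cannot decay in $\delta\zeta$ as the claim does.

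Your argument for the Coulomb sum \eqref{eq:sum coulomb} is correct and is the discrete counterpart of the paper's. The paper packs the $\nnuc$ points into a ball of radius $\bigO(\nnuc^{1/d}\delta)$ and integrates $r^{d-2}$. Your greedy filling of shells with capacities $c_d(k+1)^{d-1}$ is a cleaner justification that this packing is the worst case.
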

\begin{proof}
	The number of spheres of radius $\delta/2$ whose centers fit within a sphere of radius $r + \delta/2$ is at most $((r+\delta/2)/(\delta/2))^d \leq (3r/\delta)^d$.
	Let $r_I$ be at the origin, then
	\begin{align*}
		\sum_{\substack{J=1\\ \abs{r_I - r_J} \geq r'}}^{\nnuc} e^{-\zeta\abs{r_I - r_J}} &= \sum_{\substack{J=1\\ \abs{r_J} \geq r'}}^{\nnuc} \int_0^{\infty} \delta(r-\abs{r_J})e^{-\zeta r} dr\\
		&= \zeta \int_0^{\infty} \sum_{\substack{J=1\\ \abs{r_J} \geq r'}}^{\nnuc} \theta(r - \abs{r_J}) e^{-\zeta r}dr\\
		&\leq \zeta e^{-\zeta r'/2} \int_0^{\infty} \left(\frac{3r}{\delta}\right)^d e^{-\zeta r/2} dr\\
		&= 2d! \left(\frac{6}{\delta\zeta}\right)^d e^{-\zeta r'/2} 
	\end{align*}
	For \cref{eq:sum coulomb} we maximize by packing $\nnuc$ nuclei inside the $d$-dimensional ball of radius $\bigO(\nnuc^{1/d}\delta)$.
	Let $r_I = 0$, then by using similar steps as before, we find
	\begin{align*}
		\max_I\sum_{J\neq I}^{\nnuc} \frac{1}{\abs{r_I - r_J}} = \sum_{J} \int_{\delta}^{\bigO(\nnuc^{1/d}\delta)} \delta(r-\abs{r_J})\frac{1}{r}dr &\leq \frac{3^d}{\delta^d}\int_{\delta}^{\bigO(\nnuc^{1/d}\delta)} r^{d-2} dr = \begin{cases}
		\bigO(\log\nnuc), \quad \text{if} \quad d=1\\
		\bigO(\nnuc^{\frac{d-1}{d}}), \quad \text{if} \quad d>1.
		\end{cases}.
	\end{align*} 
\end{proof}

The localization properties of $H \in \mathrm{MESH}(\zeta,\beta,\localsize,\delta,A)$ allow us to bound the Hamiltonian coefficients.

\begin{lemma}\label[lemma]{lemma:coeff decay}
	Let $h_{pq}$ and  $h_{pqrs}$ be the 1- and 2-electron coefficients of an atomic orbital set of $\nnuc$ atoms with $\localsize$ orbitals per site and decay parameters $\beta,\zeta>0$, and let the nuclei be arranged in a $d$-dimensional configuration, then
	\begin{align*}
		\abs{h_{pq}} &\leq c_1 e^{-\frac{\zeta}{2}\abs{r_P - r_Q}}\cdot
		\begin{cases}
			\log(\nnuc), \quad \text{if} \quad d=1 \\
			\nnuc^{\frac{d-1}{d}}, \quad \text{if} \quad d>1
		\end{cases}\\
		\abs{h_{pqrs}} &\leq c_2 \min(1,\abs{r_I - r_J}^{-1}) e^{-\frac{\zeta}{2}(\abs{r_P - r_Q} + \abs{r_R - r_S})}
	\end{align*}
	for $c_1,c_2>0$ independent of $\nnuc$, and with $I\in\{P,Q\}, J\in\{R,S\}$ maximizing $\abs{r_I - r_J}$.
\end{lemma}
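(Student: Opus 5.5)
We treat the one- and two-electron coefficients separately. The only inputs are the pointwise decay $\abs{\orb{p}(r)}\leq\beta e^{-\zeta\abs{r-r_P}}$ and the gradient bound (both from the End Matter), the elementary estimate for products of two exponentials, and \cref{lemma:aux}.

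\emph{Product densities.} For any pair $(p,q)$ we use $\abs{r-r_P}+\abs{r-r_Q}\geq \abs{r_P-r_Q}$, so that
\begin{align*}
\abs{\orb{p}(r)\orb{q}(r)}\leq \beta^2 e^{-\frac{\zeta}{2}\abs{r_P-r_Q}}\,e^{-\frac{\zeta}{2}(\abs{r-r_P}+\abs{r-r_Q})}.
\end{align*}
Write $\rho_{pq}=\orb{p}\orb{q}$. This gives $\norm{\rho_{pq}}_{L^1}\leq c\,e^{-\frac{\zeta}{2}\abs{r_P-r_Q}}$. It also shows that $\rho_{pq}$ is concentrated, with exponential tails, around both $r_P$ and $r_Q$.

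\emph{One-electron term.} We split $h_{pq}$ into three pieces: the kinetic part, the nuclear attraction part, and the correction $-\sum_r h_{prrq}/2$.
\begin{itemize}
\item[(i)] For the kinetic part, integrate by parts to get $\frac12\int\nabla\orb{p}\cdot\nabla\orb{q}$. The gradient bound then gives $\bigO(e^{-\frac{\zeta}{2}\abs{r_P-r_Q}})$ by the same product estimate.
\item[(ii)] For the nuclear part, write $\sum_I Z_I\int\abs{\rho_{pq}(r)}/\abs{r-r_I}\,dr$.
 \begin{itemize}
 \item[$\bullet$] For nuclei with $\abs{r_I-r_P}\geq\delta/2$, split the integral into the region $\abs{r-r_I}\geq\frac12\abs{r_I-r_P}$ and its complement. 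On the first region the integral is $\lesssim \norm{\rho_{pq}}_1/\abs{r_I-r_P}$. On the complement, $\rho_{pq}$ carries an extra factor $e^{-\frac{\zeta}{4}\abs{r_I-r_P}}$, which is summable over $I$ by \cref{eq:sum exp}.
 \item[$\bullet$] Summing the first contributions over $I$ with \cref{eq:sum coulomb} yields the factor $\log\nnuc$ for $d=1$ and $\nnuc^{(d-1)/d}$ for $d>1$.
 \item[$\bullet$] The $\bigO(1)$ nuclei near $r_P$ are handled with the local integrability of $1/\abs{r}$ in three dimensions, using $\int_{\RR^3} e^{-\zeta\abs{r}}/\abs{r-a}\,dr\leq C$ uniformly in $a$.
 \end{itemize}
\item[(iii)] For the correction, apply the two-electron bound (derived next) with the pair $(r,r)$. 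Since $\abs{r_R-r_R}=0$, each term is $\lesssim\min(1,\abs{r_I-r_R}^{-1})\,e^{-\frac{\zeta}{2}\abs{r_P-r_Q}}$. Summing over the $\localsize\nnuc$ indices $r$ with \cref{eq:sum coulomb} again gives the stated $d$-dependent factor.
\end{itemize}

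\emph{Two-electron term.} We bound
\begin{align*}
\abs{h_{pqrs}}\leq\iint\frac{\abs{\rho_{pq}(r_1)}\abs{\rho_{rs}(r_2)}}{\abs{r_1-r_2}}\,dr_1\,dr_2 .
\end{align*}
\begin{itemize}
\item[$\bullet$] The bound $\abs{h_{pqrs}}\leq c\,e^{-\frac{\zeta}{2}(\abs{r_P-r_Q}+\abs{r_R-r_S})}$ follows from a Hardy-type inequality. Concretely, $\sup_{r_2}\int\abs{\rho_{pq}(r_1)}/\abs{r_1-r_2}\,dr_1\lesssim e^{-\frac{\zeta}{2}\abs{r_P-r_Q}}$, by the uniform bound on $\int e^{-\zeta\abs{r}}/\abs{r-a}\,dr$ applied to the product estimate. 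This is combined with $\norm{\rho_{rs}}_1$.
\item[$\bullet$] For the $1/\abs{r_I-r_J}$ improvement, fix the pair $I\in\{P,Q\}$, $J\in\{R,S\}$ maximizing $\abs{r_I-r_J}$, and let $D=\abs{r_I-r_J}$.
 \begin{itemize}
 \item[$\circ$] Each pair density is localized near both of its nuclei, so in the region where $\abs{r_1-r_2}\geq D/4$ the kernel is at most $4/D$. That region contributes $\lesssim D^{-1}\norm{\rho_{pq}}_1\norm{\rho_{rs}}_1$.
 \item[$\circ$] On the complement, at least one of $r_1,r_2$ lies at distance $\gtrsim D/8$ from its nearest nucleus among the relevant ones. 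The product weight then supplies an extra factor $e^{-\zeta D/16}$ at the cost of halving the rate $\zeta$. The Hardy-type bound controls the singular kernel there, and $e^{-\zeta D/16}\lesssim D^{-1}$.
 \end{itemize}
\end{itemize}

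The main obstacle is this case analysis. One must keep the full exponent $\frac{\zeta}{2}(\abs{r_P-r_Q}+\abs{r_R-r_S})$ while extracting $1/D$, and choosing $I,J$ as maximizers is what makes "localized near both nuclei" sufficient. If the constants do not close with the rate $\zeta/2$, we would split $e^{-\zeta(\cdot)}$ as $e^{-\frac{\zeta}{2}(\cdot)}e^{-\frac{\zeta}{2}(\cdot)}$ more carefully, reserving part of the decay for the localization step, which only changes $c_2$.
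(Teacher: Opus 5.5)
\textbf{A slip in step (iii), easily repaired.} You read $h_{prrq}$ as having pair densities $\phi_p\phi_q$ and $\phi_r\phi_r$. With the paper's convention $h_{pqrs}=\iint \phi_p(r_1)\phi_q(r_1)\phi_r(r_2)\phi_s(r_2)/|r_1-r_2|\,dr_1dr_2$, the correction term $h_{prrq}$ is exchange-type: its pair densities are $\phi_p\phi_r$ and $\phi_r\phi_q$. So the step as written bounds a different integral.

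Your own tools still close it. The two-electron bound gives
\[
|h_{prrq}|\lesssim \min(1,D^{-1})\,e^{-\frac{\zeta}{2}(|r_P-r_R|+|r_R-r_Q|)},
\]
where $D$ is the maximum of $|r_I-r_J|$ over $I\in\{P,R\}$, $J\in\{R,Q\}$. Hence $D\geq |r_P-r_R|$. The triangle inequality bounds the exponential by $e^{-\frac{\zeta}{2}|r_P-r_Q|}$. Summing $\min(1,|r_P-r_R|^{-1})$ over $r$ with \cref{eq:sum coulomb} then gives exactly the stated $d$-dependent factor. The paper asserts the sharper, $n$-independent bound $\sum_r|h_{prrq}|=O(e^{-\frac{\zeta}{2}|r_P-r_Q|})$. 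One obtains it by summing over $r$ inside the double integral, using $\sum_R e^{-\zeta(|r_1-r_R|+|r_2-r_R|)}\lesssim e^{-\frac{\zeta}{2}|r_1-r_2|}$. The lemma does not need this sharper form.

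\textbf{The rest is correct, with a different technique.} You follow the paper's skeleton: the triangle inequality on product densities, integration by parts with the gradient bound for the kinetic term, and \cref{lemma:aux} for sums over nuclei. The difference is how you control the Coulomb integrals. The paper drops one exponential from each pair density, keeping only $e^{-\frac{\zeta}{2}|r_1-r_I|}$ and $e^{-\frac{\zeta}{2}|r_2-r_J|}$. It then evaluates the resulting two-center integral of spherically symmetric densities, and the analogous one-center nuclear integral, in closed form. The factor $\min(C/D,C')$ is read off directly, with no case analysis. Your near/far splitting, with a uniform bound on $\int e^{-\zeta|r|}/|r-a|\,dr$, avoids closed forms and would transfer to other locally integrable, decaying kernels. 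The price is the case analysis.

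Two corrections to that case analysis:
\begin{enumerate}
\item On $\{|r_1-r_2|<D/4\}$ it is not true that one point is far from its \emph{nearest} nucleus among $r_P,r_Q$ (resp.\ $r_R,r_S$). Take $P=R=S$, $|r_Q-r_P|=D$, and $r_1=r_2=r_P$. What holds, and suffices, is $|r_1-r_I|+|r_2-r_J|\geq 3D/4$. Both distances appear in the weight, because each pair density decays in the sum of distances to both of its nuclei.
\item The constants do close at rate $\zeta/2$. Spend half of the exponent on $e^{-\frac{\zeta}{2}(|r_P-r_Q|+|r_R-r_S|)}$ and a quarter on $e^{-3\zeta D/16}$. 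Keep the last quarter for integrability against the singular kernel.
\end{enumerate}
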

\begin{proof}
	For the electron--electron Coulomb integral, we first use the bound on the absolute value of the atomic orbitals, and then apply the triangle inequality in the exponents to get
	\begin{align*}
		\abs{h_{pqrs}} &\leq \beta^4 \int\int\frac{e^{-\zeta\abs{r_1 - r_P}} e^{-\zeta\abs{r_1 - r_Q}} e^{-\zeta\abs{r_2 - r_R}} e^{-\zeta\abs{r_2 - r_S}}}{\abs{r_1 - r_2}}dr_2 dr_1\\
		&\leq \beta^4 e^{-\frac{\zeta}{2}(\abs{r_P - r_Q} + \abs{r_R - r_S})} \int\int\frac{e^{-\zeta\abs{r_1 - r_P}/2} e^{-\zeta\abs{r_1 - r_Q}/2} e^{-\zeta\abs{r_2 - r_R}/2} e^{-\zeta\abs{r_2 - r_S}/2}}{\abs{r_1 - r_2}}dr_2 dr_1\\
		&\leq \beta^4 e^{-\frac{\zeta}{2}(\abs{r_P - r_Q} + \abs{r_R - r_S})} \int\int\frac{e^{-\zeta\abs{r_1 - r_I}/2} e^{-\zeta\abs{r_2 - r_J}/2}}{\abs{r_1 - r_2}}dr_2 dr_1.
	\end{align*}
	Here, $I\in\{P,Q\}$ and $J\in\{R,S\}$ are chosen such that $\abs{r_I - r_J}$ is maximized.
	The integral in the last line is finite for any values of $r_I,r_J$, in particular for $r_I = r_J$.
	An analytic solution to the integral is given in \cite{roothaan_study_1951},
	\begin{align*}
		\frac{\pi^2 4^6}{r\zeta^6}\left[1 - e^{-\frac{\zeta r}{2}}\left(1 + \frac{11}{32}\zeta r + \frac{3}{64}\zeta^2 r^2 + \frac{1}{384}\zeta^3 r^3 \right)\right] \leq \min\left(\frac{\pi^2 4^6}{r\zeta^6}, \frac{706\pi^2}{\zeta^5}\right) \quad \text{with}\quad r =\abs{r_I - r_J}.
	\end{align*}
	This proves the claimed bound on $\abs{h_{pqrs}}$.
	The kinetic integral can be bounded by
	\begin{align*}
		\abs{t_{pq}} &= \frac12\abs{\int \nabla\orb{p}^*(r) \cdot \nabla\orb{q}(r)dr} \leq \frac{\beta^2 \zeta^2}{2} \int e^{-\zeta\abs{r-r_P}} e^{-\zeta\abs{r-r_Q}}dr \leq \frac{4\pi \beta^2}{\zeta} e^{-\zeta\abs{r_P - r_Q}/2},
	\end{align*}
	where we made use of the gradient decay property of atomic orbitals, and then applied the triangle inequality.
	For the electron--nucleus Coulomb interaction term we bound the exponentials with similar steps,
	\begin{align*}
		\abs{u_{pq}} &\leq \beta^2 \sum_{I=1}^{\nnuc}Z_I\int\frac{e^{-\zeta\abs{r - r_P}}e^{-\zeta\abs{r-r_Q}}}{\abs{r-r_I}}dr \leq \beta^2 e^{-\zeta\abs{r_P - r_Q}/2} \sum_{I=1}^{\nnuc}Z_I \int \frac{e^{-\zeta\abs{r-r_P}/2}}{\abs{r-r_I}}dr,
	\end{align*}
	where the integral is solved by \cite{roothaan_study_1951}
	\begin{align*}
		\frac{\pi 4^3}{r \zeta^3}\left[1 - e^{-\frac{\zeta r}{2}}\left(1 + \frac{\zeta r}{4}\right) \right] \leq \min\left(\frac{\pi 4^2}{\zeta^2},\frac{\pi 4^3}{r \zeta^3}\right) \quad \text{with}\quad r = \abs{r_P - r_I}.
	\end{align*}
	Applying \cref{lemma:aux} to bound the sum over nuclei yields
	\begin{align*}
		\abs{u_{pq}} \leq e^{-\zeta\abs{r_P-r_Q}/2}\frac{\pi 4^3 \beta^2Z_{\max}}{\zeta^3}\left(\frac{\zeta}{4} + \sum_{I\neq P}^{\nnuc} \frac{1}{\abs{r_P - r_I}} \right) \leq ce^{-\frac{\zeta}{2}\abs{r_P - r_Q}}\cdot
		\begin{cases}
			\log(\nnuc),\quad\text{if}\quad d=1,\\
			\nnuc^{\frac{d-1}{d}},\quad\text{if}\quad d>1
		\end{cases}
	\end{align*}
	for some $c>0$.
	Lastly, \cref{lemma:aux} and a similar sequence of steps as for the bound on $\abs{h_{pqrs}}$ gives $\sum_r\abs{h_{prrq}} = \bigO(e^{-\zeta\abs{r_P - r_Q}/2})$.
	Combining the upper bounds on $\abs{t_{pq}}, \abs{u_{pq}}$ and $\abs{h_{prrq}}$ yields the claimed bound.
\end{proof}

\begin{lemma}\label[lemma]{lemma:mesh norm}
	Let $h_{pq}$ and  $h_{pqrs}$ be the 1- and 2-electron coefficients of an atomic orbital set of $\nnuc$ atoms with $\localsize$ orbitals per site and decay parameters $\beta,\zeta>0$, and let the nuclei be arranged in a $d$-dimensional configuration, then for any $r'\geq 0$
	\begin{align*}
		\max_p\sum_{\mathclap{\substack{q\\\abs{r_P - r_Q}>r'}}} \abs{h_{pq}} &= e^{-\zeta r'/4}\cdot
		\begin{cases}
			\bigO(\log\nsys),\quad \text{if}\quad d=1,\\
			\bigO(\nsys^{\frac{d-1}{d}}),\quad \text{if}\quad d>1
		\end{cases}\\
		\max_p\sum_{\mathclap{\substack{qrs\\\abs{r_P - r_Q}>r'\\\abs{r_R - r_S}>r'}}} \abs{h_{pqrs}} &= e^{-\zeta r'/2}\cdot
		\begin{cases}
			\bigO(\log\nsys),\quad \text{if}\quad d=1,\\
			\bigO(\nsys^{\frac{d-1}{d}}),\quad \text{if}\quad d>1
		\end{cases}.
	\end{align*}
\end{lemma}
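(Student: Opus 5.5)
The plan is to plug the coefficient bounds of \Cref{lemma:coeff decay} into the sums and evaluate the resulting lattice-type sums with \Cref{lemma:aux}. Throughout, $\localsize$ is constant, so summing over orbitals on a given nucleus costs only a factor $\localsize$ (or $\localsize^3$). Hence $\nnuc=\Theta(\nsys)$, and it suffices to sum over nuclei.

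\emph{One-electron part.} Fix $p$ on nucleus $P$. By \Cref{lemma:coeff decay}, $\abs{h_{pq}}\le c_1 g(\nnuc)\, e^{-\zeta\abs{r_P-r_Q}/2}$, where $g(\nnuc)=\log\nnuc$ for $d=1$ and $g(\nnuc)=\nnuc^{(d-1)/d}$ for $d>1$. I then sum over all $Q$ with $\abs{r_P-r_Q}>r'$. Applying \cref{eq:sum exp} with decay rate $\zeta/2$ gives
\begin{align*}
\sum_{Q:\,\abs{r_P-r_Q}> r'} e^{-\frac{\zeta}{2}\abs{r_P-r_Q}} \le 2d!\left(\frac{12}{\delta\zeta}\right)^d e^{-\zeta r'/4}.
\end{align*}
Multiplying by $\localsize c_1 g(\nnuc)$ yields the first claim directly.

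\emph{Two-electron part.} Fix $p$. For $h_{pqrs}$ I use the bound
\begin{align*}
\abs{h_{pqrs}}\le c_2\min\bigl(1,\abs{r_I-r_J}^{-1}\bigr)\, e^{-\frac{\zeta}{2}(\abs{r_P-r_Q}+\abs{r_R-r_S})}.
\end{align*}
I split the exponential as $e^{-\frac{\zeta}{4}(\cdot)}\cdot e^{-\frac{\zeta}{4}(\cdot)}$. Using the constraints $\abs{r_P-r_Q}>r'$ and $\abs{r_R-r_S}>r'$, one half is bounded by $e^{-\zeta r'/2}$, which is the claimed prefactor. The remaining half, $e^{-\frac{\zeta}{4}(\abs{r_P-r_Q}+\abs{r_R-r_S})}$, stays available to make the sums over $Q$ and $S$ converge.

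Next I bound the Coulomb factor. Since $I,J$ maximize the distance, $\abs{r_I-r_J}\ge\abs{r_P-r_R}$, so $\min(1,\abs{r_I-r_J}^{-1})\le\min(1,\abs{r_P-r_R}^{-1})$. This removes the dependence of the Coulomb factor on $Q$ and $S$. The sum over $Q$ is then $\sum_Q e^{-\zeta\abs{r_P-r_Q}/4}=\bigO(1)$ by \cref{eq:sum exp} with $r'=0$. For fixed $R$, the sum over $S$ is likewise $\bigO(1)$.

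The remaining sum is $\sum_R\min(1,\abs{r_P-r_R}^{-1})$. The term $R=P$ contributes at most $1$, and the rest is bounded by \cref{eq:sum coulomb}: it is $\bigO(\log\nnuc)$ for $d=1$ and $\bigO(\nnuc^{(d-1)/d})$ for $d>1$. Collecting the factors $\localsize^3$ and constants gives the second claim.

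The only step needing care is the monotonicity $\abs{r_I-r_J}\ge\abs{r_P-r_R}$. It follows immediately from the definition of $I,J$ as the maximizing pair, since $(P,R)$ is one of the admissible pairs. With that in hand, the Coulomb factor decouples from $Q$ and $S$ before the exponential sums are performed. Everything else is a direct application of \Cref{lemma:aux}.
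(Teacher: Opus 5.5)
Your proposal is correct and follows essentially the same route as the paper. You insert the bounds of \Cref{lemma:coeff decay} and decouple the Coulomb factor from $Q$ and $S$ via $\abs{r_I - r_J}\geq\abs{r_P - r_R}$. You then factor the two-electron sum into two exponential sums controlled by \cref{eq:sum exp} and one Coulomb sum controlled by \cref{eq:sum coulomb}. The only cosmetic difference is that you extract $e^{-\zeta r'/2}$ by splitting the exponent pointwise and then summing with threshold $0$. The paper instead applies \cref{eq:sum exp} with threshold $r'$ and rate $\zeta/2$ to the $Q$- and $S$-sums separately, picking up $e^{-\zeta r'/4}$ from each.
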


\begin{proof}
	Combining \cref{lemma:aux} for $r'=0$ with \cref{lemma:coeff decay} gives
	\begin{align*}
		\max_p\sum_{\mathclap{\substack{q\\\abs{r_P - r_Q}>r'}}} \abs{h_{pq}} \leq c_1 \localsize 2d! \left(\frac{6}{\zeta\delta}\right)^d e^{-\zeta r'/4} \cdot 
		\begin{cases}
			\log(\nnuc), \quad \text{if} \quad d=1 \\
			\nnuc^{\frac{d-1}{d}}, \quad \text{if} \quad d>1
		\end{cases}
	\end{align*}
	which proves the first part because $\nsys = \Theta(\nnuc)$.
	Similarly, from \cref{lemma:aux} and \cref{lemma:coeff decay} we obtain the bound on the 2-electron coefficients,
	\begin{align*}
		\max_p \sum_{qrs}\abs{h_{pqrs}} &\leq c_2\localsize^3\max_P \sum_{\mathclap{\substack{QRS\\\abs{r_P-r_Q}> r'\\\abs{r_R-r_S}>r'}}} \min\left(1,\min_{I\in\{P,Q\}, J\in \{R,S\}}\abs{r_I - r_J}^{-1}\right)  e^{-\frac{\zeta}{2}\abs{r_P - r_Q}} e^{-\frac{\zeta}{2}\abs{r_R - r_S}}\\
		&\leq c_2\localsize^3\max_P \left(\sum_{R\neq P} \abs{r_P - r_R}^{-1} + 1\right) \left(\sum_{\substack{Q\\\abs{r_P - r_Q}>r'}} e^{-\frac{\zeta}{2}\abs{r_P - r_Q}}\right) \left(\sum_{\substack{S\\\abs{r_R - r_S}>r'}} e^{-\frac{\zeta}{2}\abs{r_R - r_S}}\right)\\
		& = e^{-\zeta r'/2} \cdot 
		\begin{cases}
			\bigO(\log\nsys),\quad \text{if}\quad d=1,\\
			\bigO(\nsys^{\frac{d-1}{d}}),\quad \text{if}\quad d>1
		\end{cases}.
	\end{align*}
\end{proof}

\subsection{Proof of \Cref{thm:lowdin}}

We combine ideas of Jaffard \cite{jaffard_proprietes_1990}, who studied the algebra of infinite matrices whose entries have an exponential decay structure, and of Benzi \cite{benzi_localization_2016}, who establishes entry-wise decay bounds for functions of sparse or banded matrices by combining polynomial approximation with the linear independence bound $A>0$.
In contrast to \cite{jaffard_proprietes_1990}, we consider finite dimensional block-matrices, and in contrast to \cite{benzi_localization_2016}, we do not perform an intermediate truncation to obtain sparse matrices, so exponential decay is independent of a truncation threshold.  

The class of block matrices with exponential decay is defined as follows: 

\begin{definition}
	\label{defn:decay matrices}
	Let $\zeta,\beta,\delta>0$, and let $S$ be a block matrix with $\localsize\times\localsize$ blocks $S_{IJ}$ for $I,J \in \{1,2,\dots,\nnuc\}$.
	Then $S$ is an element of the decay class $\decayalgebra(\zeta,\beta,\delta)$ if there exists a set of points $\{r_1,r_2,\dots,r_{\nnuc}\}\subset \RR^d$ satisfying the separation condition $\min_{I\neq J}\abs{r_I - r_J}\geq \delta$ and 
	\begin{align*}
		\norm{S_{IJ}}_F \leq \localsize \beta e^{-\zeta\abs{r_I - r_J}},
	\end{align*}
	where $\norm{\cdot}_F$ denotes the Frobenius norm.
\end{definition}

The Frobenius-norm $\norm{A}_F = \left(\sum_{ij}\abs{a_{ij}}^2\right)^{1/2}$ is used here, because for the Gram matrix we obtain entry-wise bounds.
Taking powers of a decay matrix changes the decay parameters $\zeta$ and $\beta$, but the parameters stay independent of the number of blocks $\nnuc$.
In the following, we let $c(\zeta,\delta,d) = 2d!(6/(\delta\zeta))^d$ be the prefactor of the bound in \cref{eq:sum exp}.

\begin{lemma}\label[lemma]{lemma: decay matrix power}
	Let $S \in \decayalgebra(\zeta,\beta,\delta)$ with $\localsize\times\localsize$ blocks, and define $R(\zeta') = \localsize \beta c(\zeta-\zeta',\delta,d)$, then for any $\zeta'\in(0,\zeta)$ and all $k\geq 1$,
	\begin{align*}
		S^k \in \decayalgebra(\zeta',\beta R(\zeta')^{k-1},\delta).
	\end{align*}
\end{lemma}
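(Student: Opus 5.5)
We argue by induction on $k$. The case $k=1$ is immediate: $\norm{S_{IJ}}_F\leq \localsize\beta e^{-\zeta\abs{r_I-r_J}}\leq \localsize\beta e^{-\zeta'\abs{r_I-r_J}}$ since $\zeta'<\zeta$, so $S\in\decayalgebra(\zeta',\beta,\delta)$ with the same points $\{r_I\}$.

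For the inductive step, assume $S^k\in\decayalgebra(\zeta',\beta R^{k-1},\delta)$, where $R=R(\zeta')$. Write $(S^{k+1})_{IJ}=\sum_K S_{IK}(S^k)_{KJ}$. Using the triangle inequality and submultiplicativity of the Frobenius norm, $\norm{AB}_F\leq\norm{A}_F\norm{B}_F$, we get
\begin{align*}
\norm{(S^{k+1})_{IJ}}_F\leq \sum_K \localsize\beta e^{-\zeta\abs{r_I-r_K}}\,\localsize\beta R^{k-1}e^{-\zeta'\abs{r_K-r_J}}.
\end{align*}
The key move is to split $e^{-\zeta\abs{r_I-r_K}}=e^{-\zeta'\abs{r_I-r_K}}e^{-(\zeta-\zeta')\abs{r_I-r_K}}$ and apply the triangle inequality $\abs{r_I-r_K}+\abs{r_K-r_J}\geq\abs{r_I-r_J}$ to the $\zeta'$ parts. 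This extracts the factor $e^{-\zeta'\abs{r_I-r_J}}$ uniformly in $K$. What remains is
\begin{align*}
\localsize^2\beta^2R^{k-1}e^{-\zeta'\abs{r_I-r_J}}\sum_K e^{-(\zeta-\zeta')\abs{r_I-r_K}}.
\end{align*}

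The remaining sum is bounded by \cref{lemma:aux}, \cref{eq:sum exp}, with $r'=0$ and decay rate $\zeta-\zeta'$. This gives at most $c(\zeta-\zeta',\delta,d)$, independent of $\nnuc$. Hence
\begin{align*}
\norm{(S^{k+1})_{IJ}}_F\leq \localsize\beta e^{-\zeta'\abs{r_I-r_J}}\cdot\beta R^{k-1}\cdot\localsize c(\zeta-\zeta',\delta,d).
\end{align*}
Since $\localsize\,c(\zeta-\zeta',\delta,d)=R/\beta$, the right-hand side equals $\localsize\,(\beta R^{k})\,e^{-\zeta'\abs{r_I-r_J}}$. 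This is exactly membership in $\decayalgebra(\zeta',\beta R^k,\delta)$, which closes the induction.

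The main point of care is the bookkeeping of constants, since the definition of $R$ must absorb exactly one factor $\localsize\beta$ per multiplication. One subtlety is that \cref{lemma:aux} is stated for points in $\RR^3$ in a $d$-dimensional configuration, whereas \cref{defn:decay matrices} uses points in $\RR^d$. We will note that the packing argument in that lemma applies verbatim in either setting.
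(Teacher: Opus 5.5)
Your proof is correct and follows essentially the same route as the paper. The paper likewise inducts on $k$, bounds the block product by Frobenius submultiplicativity, splits $e^{-\zeta\abs{r_I-r_M}}$ into its $\zeta'$ and $(\zeta-\zeta')$ parts, uses the triangle inequality to extract $e^{-\zeta'\abs{r_I-r_J}}$, and bounds the remaining sum by \cref{lemma:aux} with $r'=0$, giving exactly the same constant bookkeeping $R(\zeta')=\localsize\beta c(\zeta-\zeta',\delta,d)$.
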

\begin{proof}
	The proof is by induction over $k$.
	That the base case $k=1$ holds follows immediately from \Cref{defn:decay matrices} for any $\zeta'<\zeta$.
	Let $k>1$ and assume that $S^{k-1} \in \decayalgebra(\zeta',\beta R(\zeta')^{k-2},\delta)$, then we invoke \cref{lemma:aux} to see that
	\begin{align*}
		\norm{(S^k)_{IJ}}_F &\leq \sum_{M=1}^{\nnuc} \norm{S_{IM}}_F \norm{(S^{k-1})_{MJ}}_F\\
		&\leq \localsize^2 \beta^2 R(\zeta')^{k-2} \sum_{M=1}^{\nnuc}  e^{-\zeta\abs{r_I - r_M}} e^{-\zeta'\abs{r_M - r_J}}\\
		&\leq \localsize^2 \beta^2 R(\zeta')^{k-2} \sum_{M=1}^{\nnuc} e^{-(\zeta-\zeta')\abs{r_I - r_M}} e^{-\zeta'(\abs{r_I - r_M} + \abs{r_M - r_J})}\\
		&\leq \localsize^2 \beta^2 R(\zeta')^{k-2} c(\zeta - \zeta',\delta,d) e^{-\zeta'\abs{r_I - r_J}}\\
		& = \localsize \beta R(\zeta')^{k-1} e^{-\zeta'\abs{r_I - r_J}}.
	\end{align*}
\end{proof}

To bound the decay on $S^{-1/2}$, we look at its polynomial approximation, and we rely on standard convergence bounds using Chebyshev polynomials \cite{trefethen_approximation_2013}.

\begin{lemma}\label[lemma]{lemma:poly approx}
	Let $0 < A \leq B$, and let $p_l$ be the degree-$l$ Chebyshev truncation for $f(\lambda) = \lambda^{-1/2}$ on $[A,B]$.
	Then, for $C>0$ only depending on $A$ and $B$,
	\begin{equation}
		\sup_{\lambda \in [A, B]} \abs{f(\lambda) - p_l(\lambda)} \;\leq\; C\, \rho^l, \quad \text{where} \quad \rho = \frac{\sqrt{B/A}-1}{\sqrt{B/A}+1} < 1.
	\end{equation}
\end{lemma}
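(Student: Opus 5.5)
We reduce the statement to the classical Chebyshev approximation theory for functions analytic in a Bernstein ellipse. First we map $[A,B]$ affinely onto $[-1,1]$ via $\lambda = \frac{A+B}{2} + \frac{B-A}{2}x$. This gives $g(x) = f\big(\frac{A+B}{2} + \frac{B-A}{2}x\big)$.

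The function $g$ is analytic in the complex $x$-plane except on the branch cut of $\lambda^{-1/2}$. We place that cut along $\lambda\in(-\infty,0]$, which corresponds to $x \le x_0 := -\frac{B+A}{B-A} < -1$. Hence $g$ is analytic inside every Bernstein ellipse $E_{R}$ (foci $\pm1$, semi-axis sum $R$) with $R < R_0$. The parameter $R_0$ is that of the ellipse passing through $x_0$, namely $R_0 = |x_0| + \sqrt{x_0^2-1}$.

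A direct computation with $\kappa = B/A$ gives
\[
R_0 = \frac{\kappa+1 + 2\sqrt{\kappa}}{\kappa-1} = \frac{(\sqrt\kappa+1)^2}{(\sqrt\kappa-1)(\sqrt\kappa+1)} = \frac{\sqrt\kappa+1}{\sqrt\kappa-1} = \rho^{-1}.
\]
This is exactly where the claimed rate $\rho$ comes from.

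Next we invoke the standard bound (Trefethen, Thm.~8.2). If $g$ is analytic in the open ellipse $E_R$ with $|g|\le M$ there, then the degree-$l$ Chebyshev truncation satisfies $\|g-p_l\|_\infty \le \frac{2M R^{-l}}{R-1}$. This follows from the coefficient bound $|a_k|\le 2MR^{-k}$, obtained by contour deformation, followed by summing the tail.

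The main obstacle is that $g$ is unbounded on the critical ellipse $E_{R_0}$ itself, since it blows up like $|\lambda|^{-1/2}$ at $\lambda=0$. So we cannot take $R=R_0$ with a finite $M$ directly. We plan two ways around this.

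Our first plan is to bound the Chebyshev coefficients directly. We write $a_k = \frac{1}{\pi i}\oint_{|z|=R} g\big(\tfrac{z+z^{-1}}{2}\big) z^{-k-1}dz$ and let $R\to R_0$. Because the singularity is integrable (order $-1/2$), the contour integral remains bounded uniformly as $R\uparrow R_0$. This yields $|a_k|\le C' R_0^{-k}$ with $C'$ depending only on $A,B$. To make the limiting argument clean, we would deform the contour so that it hugs the cut near the single singular point on the ellipse. Summing the tail then gives $\sum_{k>l}|a_k| \le C' R_0^{-l}/(R_0-1) = C\rho^l$ with $C$ depending only on $A,B$.

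If this limiting step proves delicate, we fall back to a weaker rate. Taking any fixed $R<R_0$ gives an explicit $M(R)=\sup_{E_R}|\lambda|^{-1/2}<\infty$ and the bound $C(R)R^{-l}$. This slightly worse geometric rate would equally suffice for the downstream use in Theorem~\ref{thm:lowdin}, since only some $\rho<1$ independent of $\nnuc$ is needed.

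We note that neither $C$ nor $\rho$ depends on the matrix size. They depend only on $A$ and $B$. In the application we take $A=\lambda_{\min}(S)$ and bound $B$ via the Gershgorin estimate from the End Matter.
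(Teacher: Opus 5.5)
Your proposal is correct and follows essentially the route the paper relies on. The paper gives no argument beyond citing standard Bernstein-ellipse bounds for Chebyshev truncations (Trefethen), and your computation $R_0=\rho^{-1}$ for the ellipse through the preimage of the branch point $\lambda=0$ is exactly where the stated rate comes from. Your first plan is what is needed for the sharp rate $\rho$, since Theorem~8.2 alone only gives $C(R)R^{-l}$ for $R<R_0$, and it works without any contour deformation: for $z=Re^{i\theta}$ with $R\in(1,R_0)$ one has $\bigl|g\bigl(\tfrac{z+z^{-1}}{2}\bigr)\bigr|\le c\,|\cos(\theta/2)|^{-1/2}$ uniformly in $R$, so letting $R\uparrow R_0$ gives $|a_k|\le C'R_0^{-k}$.
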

We emphasize that we are not merely interested in showing that $S^{-1/2}$ is a member of a decay class for some decay parameters.
Instead, we show the stronger statement that the decay parameters of $S^{-1/2}$ are independent of $\nnuc$.

\begin{lemma}\label[lemma]{lemma:decay s^-1/2}
	Let $S \in \decayalgebra(\zeta,\beta,\delta)$ with $\localsize\times\localsize$ blocks be strictly positive definite with spectrum inside $[A,B]$ for $B \geq A > 0$.
	Then $S^{-1/2} \in \decayalgebra(\tilde\zeta,\tilde\beta,\delta)$ for some decay parameters $\tilde\zeta,\tilde\beta$ independent of $\nnuc = dim(S)/\localsize$.
\end{lemma}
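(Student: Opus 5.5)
We approximate $S^{-1/2}$ by the Chebyshev polynomial $p_l(S)$ from \Cref{lemma:poly approx} and choose the degree $l$ separately for each block index pair $(I,J)$. Since $S$ is symmetric with spectrum in $[A,B]$, the spectral theorem gives $\norm{S^{-1/2}-p_l(S)}\le C\rho^l$. The Frobenius norm of an $\localsize\times\localsize$ block is at most $\sqrt{\localsize}$ times its operator norm, and a block's operator norm is at most that of the whole matrix. Hence
\begin{align*}
\norm{(S^{-1/2})_{IJ}}_F \le \norm{(p_l(S))_{IJ}}_F + \sqrt{\localsize}\,C\rho^l .
\end{align*}
Here $C$ and $\rho$ depend only on $A,B$. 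The bound $B$ is itself $\nnuc$-independent: Gershgorin together with \cref{lemma:aux} gives $\norm{S}\le \localsize\beta\, c(\zeta,\delta,d)$ up to the diagonal contribution. So nothing in this step depends on $\nnuc$.

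Next we bound the polynomial part. Write $p_l(\lambda)=\sum_{k=0}^l a_k\lambda^k$. For $I\neq J$ the $k=0$ term contributes nothing, and \Cref{lemma: decay matrix power} gives
\begin{align*}
\norm{(p_l(S))_{IJ}}_F\le \sum_{k=1}^l |a_k|\,\localsize\beta R(\zeta')^{k-1} e^{-\zeta'|r_I-r_J|}.
\end{align*}
The main obstacle is to control the monomial coefficients $|a_k|$ of a Chebyshev expansion, which can be exponentially large in $l$. First, write $p_l=\sum_j c_jT_j(\phi(\lambda))$ with $\phi$ the affine map $[A,B]\to[-1,1]$, and bound the Chebyshev coefficients by $|c_j|\le 2M\rho^j$ using analyticity of $\lambda^{-1/2}$ in a Bernstein ellipse. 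Second, note that the monomial coefficients of $T_j(\phi(\lambda))$ are bounded in absolute value by $T_j$ evaluated at $|\phi|$'s coefficients, i.e.\ by $T_j(x_0)\le (2x_0)^j$ for an $x_0$ depending only on $A,B$ (more precisely one gets $\sum_k|[\lambda^k]T_j(\phi(\lambda))|\,\Lambda^k\le \gamma^j$ for some constant $\gamma=\gamma(A,B,\Lambda)$). Combining the two, we obtain for any $\Lambda\ge1$
\begin{align*}
\sum_k |a_k|\Lambda^k \le K\,\Gamma^l,
\end{align*}
where $K$ and $\Gamma\ge 1$ depend only on $A,B$ and $\Lambda=\max(1,R(\zeta'))$. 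Fixing $\zeta'=\zeta/2$, this yields $\norm{(p_l(S))_{IJ}}_F\le K'\Gamma^l e^{-\zeta|r_I-r_J|/2}$.

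Finally we balance the two terms. Set $l=\lfloor \theta |r_I-r_J|\rfloor$ with $\theta>0$ small enough that $\theta\log\Gamma\le \zeta/4$. Then the polynomial term is at most $K' e^{-\zeta|r_I-r_J|/4}$, and the approximation term is at most $\sqrt{\localsize}C\rho^{-1}e^{-\theta|\log\rho|\,|r_I-r_J|}$. Take
\begin{align*}
\tilde\zeta=\min\bigl(\zeta/4,\ \theta|\log\rho|\bigr)
\end{align*}
and $\tilde\beta$ as the sum of the constants divided by $\localsize$. The diagonal blocks ($I=J$, $l=0$) are bounded by $\sqrt{\localsize}\norm{S^{-1/2}}\le\sqrt{\localsize}A^{-1/2}$, which the same $\tilde\beta$ can absorb. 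All constants depend only on $\zeta,\beta,\localsize,\delta,d,A,B$, and hence are independent of $\nnuc$.

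If the coefficient-growth estimate turns out to be messy, a fallback is to avoid monomials entirely. One expands $T_j(\phi(S))$ through its three-term recurrence and proves directly by induction, as in \Cref{lemma: decay matrix power}, that $T_j(\phi(S))\in\decayalgebra(\zeta',\beta\gamma^j,\delta)$ for some constant $\gamma$. This uses only that $\phi(S)=\alpha S+\beta_0 I$ is itself a decay matrix plus a multiple of the identity.
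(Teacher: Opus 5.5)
Your proposal is correct and follows essentially the same route as the paper: a Chebyshev truncation $p_l$ of $\lambda^{-1/2}$ with degree $l=\lfloor \theta\abs{r_I-r_J}\rfloor$, the polynomial part controlled by \Cref{lemma: decay matrix power} with a factor geometric in $l$, the remainder controlled by $\sqrt{\localsize}\,C\rho^l$, and a balance of the two exponentials. The differences are only cosmetic. You expand in monomials of $S$ rather than of the shifted matrix $u(S)$, which avoids having to handle the identity shift in $u(S)$. You also track the floor in $l$ explicitly via the extra factor $\rho^{-1}$.
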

\begin{proof}
	Let $p_l(\lambda) = \sum_{k=0}^l a_k T_k(u(\lambda))$ be the degree-$l$ Chebyshev truncation for $f(\lambda) = \lambda^{-1/2}$ on $[A,B]$, where $u(\lambda) = (2\lambda - (A+B))/(B -A)$ maps $[A,B]$ to $[-1,1]$.
	The sum of absolute monomial coefficients of $T_k$ is at most $\nu^k$ with $\nu = (1 + \sqrt{2})$.
	For $\zeta' \in (0,\zeta)$ define $R' = \nu\max(1,R(\zeta'))$, where $R(\zeta')$ is from \cref{lemma: decay matrix power}, then
	\begin{align*}
		\norm{(p_l(S))_{IJ}}_F \leq \sum_{k=0}^l \abs{a_k} \norm{(T_k(u(S)))_{IJ}}_F \leq \frac{2\norm{a}_1 \localsize\beta}{B-A} R'^l e^{-\zeta'\abs{r_I - r_J}}.
	\end{align*}
	Here, \Cref{lemma: decay matrix power} was applied to $u(S)$, which changed $\beta \to \beta 2/(B-A)$.
	The Chebyshev coefficients decay exponentially for the analytic function $f(\lambda)$ on $[A,B]$, therefore the 1-norm of the Chebyshev coefficients, $\norm*{a}_1 = \sum_{k=0}^{\infty}\abs*{a_k}$, is finite. 
	As we assume that $A$ and $B$ are independent of $\nnuc$, it is independent of $\nnuc$ as well.
	The existence of $S^{-1/2}$ follows directly from the assumption on the spectrum of $S$ and the spectral theorem.
	By \cref{lemma:poly approx} we can bound 
	\begin{align*}
		\norm{(S^{-1/2} - p_l(S))_{IJ}}_F &\leq \sqrt{\localsize} \norm{S^{-1/2} - p_l(S)}_2\leq \sqrt{\localsize} \sup_{\lambda\in[A,B]} \abs{\lambda^{-1/2} - p_l(\lambda)} \leq C \sqrt{\localsize} \rho^l.
	\end{align*}
	We set $l = \lfloor b\abs{r_I - r_J}\rfloor$ for some constant $b>0$ and apply the triangle inequality to obtain
	\begin{align*}
		\norm{(S^{-1/2})_{IJ}}_F &\leq \frac{2\norm*{a}_1 \localsize \beta }{B-A}R'^{b\abs{r_I - r_J}} e^{-\zeta'\abs{r_I - r_J}} + C \sqrt{\localsize} \rho^{b\abs{r_I - r_J}} \\
		&\leq  \frac{2\norm{a}_1\localsize \beta}{B-A} e^{-(\zeta' - b\log(R'))\abs{r_I - r_J}} + C \sqrt{\localsize} e^{-b\abs{\log\rho}\abs{r_I - r_J}}.
	\end{align*}
	The right-hand side is independent of $\nnuc$, and we can find the optimal decay rate by minimizing the maximal exponent with respect to $b$: $\tilde\zeta(\zeta') = \zeta'\abs{\log\rho}/(\log R' + \abs{\log\rho}) > 0$.
	Thus, with $\tilde\beta = \frac{2\norm{a}_1 \beta}{B-A} + C/\sqrt{\localsize}$, we reach the desired result
	\begin{align*}
		\norm{(S^{-1/2})_{IJ}}_F & \leq \localsize\tilde\beta e^{-\tilde\zeta\abs{r_I - r_J}}.
	\end{align*}
\end{proof}

We conclude the proof of \cref{thm:lowdin} now.
As shown in the End Matter, the entries of the Gram matrix of an atomic orbital set decay as $\abs{S_{Ii,Jj}} = \bigO\left(e^{-\zeta\abs{r_I - r_J}/2}\right)$, and therefore
 \begin{align*}
	\norm{S_{IJ}}_F \leq \localsize \beta' e^{-\zeta\abs{r_I - r_J}/2}
\end{align*}
for some $\beta'> 0$.
Therefore, $S$ is a member of the decay class $\decayalgebra(\zeta/2,\beta',\delta)$ with decay parameters independent of $\nnuc$.
Using \cref{lemma:aux}, the largest eigenvalue of $S$ can be bounded independently of $\nnuc$ by
\begin{align*}
		\lambda_{\max} &= \norm{S}_2 \leq \sqrt{\norm{S}_1 \norm{S}_{\infty}} \leq \max_{I,i} \sum_{J,j}\abs{S_{Ii,Jj}} \leq \localsize\beta' \max_I \sum_J e^{-\zeta\abs{r_I - r_J}/2} \leq \localsize \beta' d! \left(\frac{3}{\delta\zeta}\right)^d,
\end{align*}
so the spectral range of $S$ is $\nnuc$-independent.
 Consequently, $S^{-1/2}$ is also a decay matrix with decay parameters independent of $\nnuc$ (\cref{lemma:decay s^-1/2}).
Finally, we need to show that the Löwdin orbitals satisfy the atomic orbital decay condition with decay parameters independent of $\nnuc$.
We set $\zeta'' = \min(\tilde\zeta,\zeta)/2$ and use the triangle inequality in the exponent to obtain the upper bound with appropriately defined decay parameters
\begin{align*}
	\abs{\nabla\orb{Ii}(r)} &\leq \sum_{J,j} \abs{\left(S^{-1/2}\right)_{Ii,Jj}} \abs{\nabla\ao{Jj}(r)} \leq \localsize^{3/2} \beta'\beta\zeta \sum_{J}  e^{-\tilde\zeta\abs{r_I - r_J}} e^{-\zeta\abs{r - r_J}} \leq \localsize^{3/2} \beta' \beta\zeta c(\zeta'',\delta,d)e^{-\zeta''\abs{r-r_I}}.
\end{align*}

\subsection{Proof of \Cref{lemma:hb multipole}}

In this part of the Appendix, we bound the claimed error due to truncation of the multipole expansion.
We insert the definitions of the multipole expansion and multipole operators, \cref{eq:multipole expansion infty,eq:defn multipole moments,eq:defn multipole ops}, and use $\norm{T_{pq}}=1$, to obtain
\begin{align*}
	&\norm{h_b - h_b^{(\degexp)}} \leq \frac12\sum_{\substack{(p,q)\in E_b\\(r,s)\in E_{\bar{b}}}} \abs{h_{pqrs} - \sum_{\ell=0}^{\degexp} \sum_{m=-\ell}^{\ell} v_{pq}^{\ell m} w_{rs}^{\ell m} }\\
	&\leq \frac12\sum_{\substack{(p,q)\in E_b\\(r,s)\in E_{\bar{b}}}} \int \int \abs{\orb{p}^*(r_1)\orb{q}(r_1)} \abs{\frac{1}{\abs{r_1 - r_2}} - \sum_{\ell=0}^{\degexp} \sum_{m=-\ell}^{\ell} \abs{r_1}^{\ell} Y_{\ell m}(\theta_1,\varphi_1)\frac{Y_{\ell m}(\theta_2,\varphi_2)}{\abs{r_2}^{\ell+1}}} \abs{\orb{r}^*(r_2)\orb{s}(r_2)} dr_1 dr_2.
\end{align*}
 At level $j$, the maximal distance between the supports of $\orb{p}\orb{q}$ and $\orb{r}\orb{s}$ is $\abs{r_1 - r_2} \leq \sqrt{3 \cdot 4^2} d_j + 2\rupper<8d_j$, where we ignored modifications due to $r_{\max}$.
 The difference in the integrand is the truncation error in the multipole expansion (see End Matter), which is upper bounded by $3(8/11)^{\degexp}/d_j \leq 24\cdot e^{-\degexp/4}/\abs{r_1 - r_2}$, and consequently
 \begin{align*}
 	\norm{h_2^{\mathrm{long}} - \frac12\sum_{j=1}^J \sum_{b\in B_j} h_b^{(\degexp)}} & \leq 24 e^{-\degexp/4} \sum_{j=1}^J \sum_{b\in B_j} \sum_{\substack{(p,q)\in E_b\\(r,s)\in E_{\bar{b}}}} \int \int \abs{\orb{p}^*(r_1)\orb{q}(r_1)} \frac{1}{\abs{r_1 - r_2}} \abs{\orb{r}^*(r_2)\orb{s}(r_2)} dr_1 dr_2\\
	& =\bigO(e^{-\degexp/4}\nsys^2).
 \end{align*}
 Here, we extended the summation to all 2-electron interactions, and we applied \Cref{lemma:mesh norm} to the atomic-orbital set $\{\abs{\orb{Ii}}\}$.
By choosing $\degexp = \bigO(\log(\nsys/\eps))$ we have a truncation error smaller than $\eps$.

\subsection{Proof of \Cref{lemma:rank1 dirty ancillas}}
\label{sec:proof rank1 dirty ancillas}
By assumption, $\abs{\alpha_p\beta_q}< c=\bigO(1)$, and we can always rescale the coefficients such that $\abs{\alpha_p},\abs{\beta_q}<\sqrt{c}$ without changing the overall operator.   
In a first step, we rewrite $vw = \sum_{p\in[N_b],q\in[N_{\bar{b}}]}\alpha_p\beta_qZ_p Z_q$:
\begin{align*}
	vw = -\sum_{\substack{p\in[N_b]\\q\in[N_{\bar{b}}]}}\alpha_p\beta_q I + \sum_{\substack{p\in[N_b]\\q\in[N_{\bar{b}}]}} \left(\alpha_p \beta_qZ_p + \alpha_p \beta_q Z_q\right) + 4\sum_{\substack{p\in[N_b]\\q\in[N_{\bar{b}}]}} \alpha_p \beta_q \left(\frac{I - Z_p}{2}\right)\left(\frac{I - Z_q}{2}\right).
\end{align*}
The first term just effects a global phase, while the second term is implemented by one layer of single-qubit rotations.
The operators $(I - Z_p)/2$ act on a computational basis state $\ket{x} = \ket{x_0x_1\dots x_{N - 1}}$ of the system as $(I - Z_p)/2\ket{x} = x_p\ket{x}$.
The third term is realized by implementing the transformation $\ket{x}\ket{y} \to \exp(-i4t(x\cdot\alpha)(y\cdot\beta))\ket{x}\ket{y}$ coherently, where $x\cdot\alpha \equiv \sum_{p=0}^{N_b-1}x_p\alpha_p$.
We require two ancillary registers which can be in arbitrary states. 
By linearity of the whole transformation, we may assume them to be in computational basis states $\ket{v}\ket{w}$.
On a high level, the steps are
\begin{align}
	\ket{x}\ket{y} \ket{v}\ket{w}&\longrightarrow \ket{x}\ket{y} \ket{v + x\cdot\alpha} \ket{w + y\cdot\beta} \qquad && \text{(addition to dirty ancillas)} \label{eq:rank1 initialize}\\
	&\longrightarrow e^{-i4t(v + x\cdot\alpha)(w + y\cdot\beta)} \ket{x}\ket{y} \ket{v + x\cdot\alpha} \ket{w + y\cdot\beta} && \text{(phase multiplication)} \label{eq:rank1 multiplication}\\
	& \longrightarrow e^{-i4t(x\cdot\alpha)(y\cdot\beta)} \ket{x} \ket{y} \ket{v}\ket{w}. && \text{(correct \& uncompute)} \label{eq:rank1 uncompute}
\end{align}

It is convenient to rewrite the product in the phase factor into an expression involving integer-valued variables,
\begin{align}\label{eq:phase product integer}
	4t(x\cdot \alpha)(y\cdot \beta) \approx 2\pi \frac{(x\cdot a)(y\cdot b)}{2^{2\lowprecbits}}
\end{align}
with $a = (a_0, a_1,\dots, a_{N_b-1}), b=(b_0, b_1,\dots,b_{N_{\bar{b}}-1})$ being vectors of integers.

To arrive at this expression, we truncate the binary expansions of $\sqrt{2t/\pi}\alpha_p, \sqrt{t2/\pi}\beta_q$ to a common range of bits of precision, and we let $\lowprecbits$ be the number of bits corresponding to precision smaller than 1.
We then define integers $a_p, b_q$ to be the rescaled values of the truncated binary expansions, $a_p = \lfloor2^{\lowprecbits}\sqrt{2t/\pi}\alpha_p \rfloor$ and $b_q = \lfloor2^{\lowprecbits}\sqrt{2t/\pi}\beta_q \rfloor$.
The resulting truncation error in \cref{eq:phase product integer} is of size $\bigO(\nsys^2 t2^{-\lowprecbits})$, and to limit it to $\eps>0$, we set $\lowprecbits = \bigO(\log\frac{\nsys t}{\eps})$.
We can then set the number of ancillas to $\nanc = \bigO(\log\frac{\nsys t}{\eps})$ such that $\nanc \geq 2\lowprecbits$ and $\abs{x\cdot a}, \abs{y\cdot b} < 2^{\nanc}$ are satisfied.

In \cref{circ:addition} a quantum circuit is shown that performs the modular addition $\ket{w}\ket{x}\to\ket{w + x\cdot a \pmod{2^{\nanc}}}\ket{x}$. 
It requires $\nanc$ sequential applications of the multi-controlled phase gate, for which a log-depth circuit construction is presented in \cref{circ:multi phase rot}.
The quantum Fourier transform (QFT) requires $\bigO(\nanc^2)$ total gates and depth $\bigO(\nanc)$.
Therefore the total gate count of the modular addition is $\bigO(\nanc^2 + \nanc\nsys) = \bigO(\log^2\frac{\nsys t}{\eps} + \nsys\log\frac{\nsys t}{\eps})$ and the depth is $\bigO(\nanc\log\nsys) = \bigO(\log\frac{\nsys t}{\eps}\log\nsys)$.

Next, we can compute the phase factor containing the product, $\exp{-i2\pi(v + x\cdot a \pmod{2^{\nanc}})(w + y\cdot b \pmod{2^{\nanc}})/2^{2\lowprecbits}}$, using $\nanc^2$ controlled phase gates in $\nanc$ layers (which is subleading compared to the addition).
Due to the periodicity of the exponential function, the modular arithmetic property $(a \pmod{p})(b \pmod{p})\pmod{p} = ab \pmod{p}$, and the fact that $\nanc \geq 2\lowprecbits$, we can simplify the phase factor to $\exp{-i2\pi(v + x\cdot a)(w + y\cdot b)/2^{2\lowprecbits}}$.
Uncomputation $\ket{w + x\cdot a \pmod{2^{\nanc}}}\ket{x}\to\ket{w}\ket{x}$ is performed by addition of $2^{\nanc} - x\cdot a$.

The phase factor is not correct yet because it still contains the initial values of the dirty ancilla registers, $v$ and $w$.
To correct for that, we simply compute the three phase factors $\exp{i2\pi(v + x\cdot a)w/2^{2\lowprecbits}}, \exp{i2\pi v(w + y\cdot b)/2^{2\lowprecbits}}$ and $\exp{-i2\pi vw/2^{2\lowprecbits}}$.
This is achieved by adding only one of the values $x\cdot a, y\cdot b$ to the ancilla registers (or none) and then applying the controlled phase gates as before.

All in all, this shows how to implement to precision $\eps$
\begin{align*}
	\ket{x}\ket{y} \rightarrow e^{-it(x\cdot \alpha)(y\cdot \beta)}\ket{x}\ket{y}
\end{align*}
with $\bigO(\log\frac{\nsys t}{\eps})$ dirty ancilla qubits in depth $\bigO(\log\frac{\nsys t}{\eps}\log\nsys)$ and with $\bigO(\log^2\frac{\nsys t}{\eps} + \nsys\log\frac{\nsys t}{\eps})$ total gates.

\begin{figure}
\centering
$
\begin{quantikz}
	& & \gate{P(a)} &  \\
	& \qwbundle{n}  & \mctrl{-1} & 
\end{quantikz}
\quad = \quad
\begin{quantikz}
	\lstick{$g_0\ket{0} +  g_1\ket{1}$} & \ctrl{4} & \gate{P(\frac12\sum_k a_k)} & \ctrl{4} & & \rstick{$g_0\ket{0} + g_1 e^{i x\cdot a} \ket{1}$} \\
	\lstick{$\ket{x_0}$} & \targ{} & \gate{P(-a_0/2)} & \targ{} & \gate{P(a_0/2)} & \rstick{$\ket{x_0}$} \\
	\lstick{$\ket{x_1}$} & \targ{} & \gate{P(-a_1/2)} & \targ{} & \gate{P(a_1/2)} & \rstick{$\ket{x_1}$} \\
	\setwiretype{n} \vdots & & \vdots & & \vdots & \\
	\lstick{$\ket{x_{n-1}}$} & \targ{} & \gate{P(-a_{n-1}/2)} & \targ{} & \gate{P(a_{n-1}/2)} & \rstick{$\ket{x_{n-1}}$} 
\end{quantikz}
$
\caption{An implementation of the multi-controlled phase gate in logarithmic depth. 
	The fan-out gate can be implemented in logarithmic depth \cite{ziShallowQuantumCircuit2025}. 
	The single-qubit phase gate is defined as $P(\theta) = \dyad{0} + e^{i\theta}\dyad{1}$.}
\label{circ:multi phase rot}
\end{figure}

\begin{figure}
\centering
$
\begin{quantikz}
	\lstick{$\ket{w_0}$} & \gate[4]{\qft} & \gate{P(2\pi a/2^1)} & & & \gate[4]{\qft^{\dag}} & \\
	\lstick{$\ket{w_1}$} & & & \gate{P(2\pi a/2^2)} & & &  \\
	\setwiretype{n} \vdots & & & & \vdots & & \vdots \\
	\lstick{$\ket{w_{\nanc-1}}$} & & & & \gate{P(2\pi a/2^{\nanc})} & &  \\
	\lstick{$\ket{x}$} & \qwbundle{n} & \mctrl{-4} & \mctrl{-3} & \mctrl{-1} & &
\end{quantikz}
$
\caption{Quantum circuit performing the addition $\ket{w}\ket{x}\to\ket{w + x\cdot a \pmod{2^{\nanc}}}\ket{x}$ based on Fourier-arithmetic.}
\label{circ:addition}
\end{figure}

\subsection{Proof of \Cref{thm:trotter error}}
For a Hamiltonian decomposed into terms as $H = \sum_{\gamma} H_{\gamma}$ the time-evolution can be approximated by a Trotter product formula which amounts to composing time-evolutions under individual $H_{\gamma}$.
The higher-order Trotter formulas are recursively defined as
\begin{align*}
	S_1(t) = \prod_\gamma e^{-it H_{\gamma}},\qquad S_2(t) = S_1(t/2) S_1^{\dag}(-t/2), \qquad S_{2l}(t) = S_{2l-2}^2(u_l t) S_{2l-2}((1-4u_l) t) S_{2l-2}^2(u_l t)
\end{align*}
with $u_l = 1/(4-4^{1/(2l-1)})$.
The number of stages per Trotter step, \textit{i.e.} the number of iterations through all terms, is $1$ for $S_1(t)$ and $2\times 5^{l-1}$ for $S_{2l}(t)$.
In the main text we showed that a single stage (and therefore a single step) can be implemented in polylogarithmic depth in $\nsys$.
Here we analyze the Trotter error which can be upper bounded in terms of a nested commutator expression \cite{childs_theory_2021}
\begin{align*}
	\norm{e^{-iHt} - S_{k}(t)} &= \bigO\left(\alpha_{\text{comm},k}t^{k+1}\right)\\
	\alpha_{\text{comm},k} &= \sum_{\gamma_1,\gamma_2,\dots,\gamma_{k+1}}\norm{[H_{\gamma_{k+1}},[H_{\gamma_k},\dots[H_{\gamma_2},H_{\gamma_1}]\dots]]},
\end{align*}

We consider $H\in \mathrm{MESH}(\zeta,\beta,\localsize,\delta,A)$; its Trotter error depends on the decomposition into simple terms $H_{\gamma}$, and the one underlying our Trotter step implementation is specified in \Cref{eq:hamil decomp}.
We view the decomposition as a linear combination of elementary operators $T_{pq}$ and $T_{pq}T_{rs}$.
First, we note that in the implementation of $h_2^{\mathrm{long}}$, the actual building blocks in the implementation are the evolutions under products of 1-sparse 1-electron operators, $e^{-iv_b^{\ell m\sigma}w_b^{\ell m \rho}}$.
By construction, all terms in $v_b^{\ell m\sigma}$ and in $w_b^{\ell m \rho}$ commute, therefore
\begin{align*}
	e^{-itv_b^{\ell m\sigma}w_b^{\ell m \rho}} &= \prod_{\substack{(p,q)\in E_b\\(r,s)\in E_{\bar{b}}}} e^{-it v_{pq}^{\ell m\sigma} w_{rs}^{\ell m \rho} T_{pq} T_{rs}}
\end{align*}
and hence our Trotter decomposition does in fact reduce to time evolutions under the aforementioned elementary operators. 
We note that in the long-range decomposition
\begin{align}
	h_b = \frac12\sum_{\substack{(p,q)\in E_b\\(r,s)\in E_{\bar{b}}}} h_{pqrs} T_{pq} T_{rs} \approx \frac12\sum_{\ell=0}^{\degexp}\sum_{m=-\ell}^{\ell} \sum_{\substack{(p,q)\in E_b\\(r,s)\in E_{\bar{b}}}}\sum_{\sigma,\rho=1}^{\Delta+2} v_{pq}^{\ell m\sigma} w_{rs}^{\ell m\rho} T_{pq} T_{rs}
\end{align}
an operator $T_{pq} T_{rs}$ appears multiple times, once for each pair of values $\ell,m$.
Its cumulative weight is at most
\begin{align*}
	\abs{h_{pqrs}'} :=& \sum_{\ell=0}^{\degexp}\sum_{m=-\ell}^{\ell}\sum_{\sigma,\rho=1}^{\Delta+2} \abs{v_{pq}^{\ell m\sigma} w_{rs}^{\ell m\rho}}
	 = \sum_{\ell=0}^{\degexp}\sum_{m=-\ell}^{\ell} \abs{v_{pq}^{\ell m} w_{rs}^{\ell m}}\\
	=& \sum_{\ell=0}^{\degexp} \sum_{m=-\ell}^{\ell}\abs{\int \int \orb{p}^*(r_1)\orb{q}(r_1) \abs{r_1}^{\ell} Y_{\ell m}(\theta_1,\varphi_1)\frac{Y_{\ell m}(\theta_2,\varphi_2)}{\abs{r_2}^{\ell+1}}\orb{r}^*(r_2)\orb{s}(r_2) dr_1 dr_2}\\
	\leq& \int \int \abs{\orb{p}^*(r_1)} \abs{\orb{q}(r_1)} \frac{1}{\abs{r_2}} \sum_{\ell=0}^{\degexp} \left(\frac{\abs{r_1}}{\abs{r_2}}\right)^{\ell} \abs{\orb{r}^*(r_2)} \abs{\orb{s}(r_2)} dr_1 dr_2\\
	\leq& 24 \int \int \abs{\orb{p}^*(r_1)} \abs{\orb{q}(r_1)}\frac{1}{\abs{r_1 - r_2}} \abs{\orb{r}^*(r_2)} \abs{\orb{s}(r_2)} dr_1 dr_2
\end{align*}
where we used Unsöld's theorem, $\abs{r_1}/\abs{r_2}<8/11$, $\abs{r_1 - r_2}<7d_j + 2\rupper < 8d_j$ and $\abs{r_2}>11d_j/8$.
Because the commutator is linear in its arguments, we can equivalently study $\alpha_{\mathrm{comm},k}$ for the decomposition of $H$ into terms $H_{\gamma}\in \{h_{pq}T_{pq}\}_{pq}\cup\{\abs*{h'_{pqrs}}T_{pq}T_{rs}\}_{pqrs}$.
Concretely, each term $T_{pq}T_{rs}$ appears only once, but with increased weight $\abs{h_{pqrs}'}$.
We abbreviate the nested commutator expression by
\begin{align}\label{eq:W^k}
	W^{(k)} = \sum_{\gamma_0,\gamma_1,\dots,\gamma_{k}}[H_{\gamma_{k}},[H_{\gamma_{k-1}},\dots[H_{\gamma_1},H_{\gamma_0}]]],
\end{align}
and we introduce the skew-Hermitian operators $S_{pq} = \cre{p}\des{q} - \cre{q}\des{p}$ which are related to $T_{pq} = \cre{p}\des{q} + \cre{q}\des{p}$ via the commutation relationships
\begin{align}\label{eq:commutation relations T S}
\begin{split}
	[T_{pq},T_{rs}] &= \delta_{qr} S_{ps} + \delta_{pr} S_{qs} + \delta_{qs} S_{pr} + \delta_{ps} S_{qr}\\
	[S_{pq},T_{rs}] &= \delta_{qr}T_{ps} - \delta_{pr} T_{sq} + \delta_{qs} T_{pr} - \delta_{ps} T_{qr}.
\end{split}
\end{align}
\begin{lemma}\label[lemma]{commutator}
	The nested commutator \cref{eq:W^k} is of the form
	\begin{align*}
	W^{(k)} = \sum_{\vec{p} \vec{q}} w_{\vec{p} \vec{q}}^{(k)} W_{p_0q_0}W_{p_1q_1}\dots W_{p_{k+1}q_{k+1}}
	\end{align*}
	where $w_{\vec{p} \vec{q}}^{(k)}$ is a coefficient indexed by vectors of orbital indices $\vec{p},\vec{q}$, and $W_{p_iq_i} \in \{T_{pq}\}_{pq} \cup \{S_{pq}\}_{pq}$.
	Moreover, the 1-norm of the coefficients is bounded by
	\begin{align*}
		\norm*{w^{(k)}}_1 = 
		\begin{cases}
			\bigO(8^k(k+1)!\nsys \log^{k+1}\nsys)& d=1 \\
			\bigO(8^k(k+1)!\nsys^{(k+1)(\frac{d-1}{d}) + 1})&d>1
		\end{cases}.
	\end{align*}
\end{lemma}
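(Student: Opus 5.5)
We argue by induction on $k$, tracking the structure of each nested commutator as a product of $k+2$ operators $W_{pq}\in\{T_{pq}\}\cup\{S_{pq}\}$, together with a combinatorial bound on how many terms each commutation creates and on which indices stay free.

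For the base case $k=0$, $W^{(0)}=\sum_{\gamma_0}H_{\gamma_0}$ is a sum of products of at most two $T$'s, with coefficients $h_{pq}$ and $\abs{h'_{pqrs}}$. One-body terms are padded by writing $T_{pq}=T_{pq}\cdot T_{pp}$-type identities, or the statement is read as allowing at most $k+2$ factors. Its 1-norm is $\sum_p\big(\sum_q\abs{h_{pq}}+\sum_{qrs}\abs{h'_{pqrs}}\big)$. By \Cref{lemma:mesh norm} with $r'=0$, applied also to the bound $\abs{h'_{pqrs}}\le 24\int\int\abs{\phi_p\phi_q}\abs{r_1-r_2}^{-1}\abs{\phi_r\phi_s}$ (i.e.\ to the atomic-orbital set $\{\abs{\phi_{Ii}}\}$, which \Cref{lemma:coeff decay} covers), this is $\nsys\cdot\bigO(\log\nsys)$ for $d=1$ and $\nsys\cdot\bigO(\nsys^{(d-1)/d})$ for $d>1$.

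For the inductive step, write $W^{(k)}=\sum_{\gamma_k}[H_{\gamma_k},W^{(k-1)}]$. Commuting one $T_{pq}$ or $T_{pq}T_{rs}$ into a product of $k+1$ factors $W_{p_iq_i}$ uses Leibniz's rule. This produces at most $2(k+1)$ single commutators $[T,W_{p_iq_i}]$, and by \cref{eq:commutation relations T S} each is a sum of at most $4$ terms $\pm W$ with Kronecker deltas. The new coefficient therefore gains a factor of at most $8(k+1)$ times the weight of the new term. The deltas force the new term to share an index with some existing factor, so we sum over one free index at a fixed shared orbital $p$. This replaces a sum over all $\gamma_k$ with $\max_p\big(\sum_q\abs{h_{pq}}+\sum_{qrs}\abs{h'_{pqrs}}\big)$, which again by \Cref{lemma:mesh norm} is $\bigO(\log\nsys)$ or $\bigO(\nsys^{(d-1)/d})$. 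The max over all index positions is taken with the other three indices of a two-body term summed. If the shared index is $r$ or $s$, we use the symmetry $h'_{pqrs}=h'_{rspq}$ and the analogous permutation symmetry of the bound.

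Multiplying over $k$ steps gives a prefactor $\prod_{j=1}^k 8(j+1)\le 8^k(k+1)!$. Combined with the base-case factor $\nsys\cdot(\log\nsys\text{ or }\nsys^{(d-1)/d})$ and $k$ further factors of the same local sum, this yields the claimed $\bigO(8^k(k+1)!\,\nsys\log^{k+1}\nsys)$ and $\bigO(8^k(k+1)!\,\nsys^{(k+1)\frac{d-1}{d}+1})$. The factor count also works: a two-body commutator $[T_{pq}T_{rs},\cdot]$ adds a net of one factor per step, since one $W$ is consumed by the commutator and two are reinserted.

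The main obstacle is the bookkeeping in the inductive step. The claim that each added term contributes only a \emph{local} (max over one index) sum rather than a global one depends on the deltas actually pinning one index of the new term to an index already present. When the commutator hits a factor of the two-body term, the contracted index may be any of $p,q,r,s$. We must therefore verify that \Cref{lemma:mesh norm} bounds $\max_x\sum\abs{h'}$ uniformly over which slot $x$ occupies, including the Coulomb-range slot. I would handle this by symmetrizing the bound of \Cref{lemma:coeff decay} and reproving the second estimate of \Cref{lemma:mesh norm} with the fixed index in the pair $(r,s)$; the same factorization argument applies.
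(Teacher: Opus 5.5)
Your proposal is correct and follows the paper's proof essentially step for step. The shared structure is: induction on $k$; Leibniz expansion of $[H_{\gamma_k},W^{(k-1)}]$, producing at most $4(k+1)$ (one-body) or $8(k+1)$ (two-body) delta-weighted terms; and the pinned index reducing the new sum to the local bounds $\max_p\sum_q\abs{h_{pq}}$ and $\max_p\sum_{qrs}\abs{h'_{pqrs}}$ from \Cref{lemma:mesh norm}. The slot issue you flag is handled in the paper by tacitly using the permutation symmetry of the majorant $24\int\int\abs{\phi_p\phi_q}\abs{r_1-r_2}^{-1}\abs{\phi_r\phi_s}$, so no re-proof is needed.

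The only slip is the suggested padding $T_{pq}=T_{pq}T_{pp}$, which is not an identity because $T_{pp}$ is the number operator. This is harmless, since your fallback reading of the statement as allowing at most $k+2$ factors is exactly the reading the paper uses.
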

\begin{proof}
	The proof is by induction over $k$.
	For $k=0$ we have $W^{(0)} = H$ which by definition is of the claimed form.
	We can apply \Cref{lemma:mesh norm} to the atomic orbitals $\{\abs{\orb{p}}\}_p$ and multiply by 24 to bound
	\begin{align*}
		\norm*{w^{(0)}}_1 = \sum_{pq}\abs{h_{pq}} + \sum_{pqrs}\abs{h'_{pqrs}} = \begin{cases} \bigO(\nsys\log\nsys)&d=1\\ \bigO(\nsys^{\frac{2d-1}{d}})&d>1 \end{cases}.
	\end{align*}
	Now let $k>0$ and assume the claim holds for $k-1$.
	We will treat the 1- and 2-electron terms entering $W^{(k)}$ separately,
	\begin{align*}
		W^{(k)} = \sum_{pq} h_{pq} [T_{pq},W^{(k-1)}] + \sum_{pqrs} \abs{h'_{pqrs}}[T_{pq}T_{rs},W^{(k-1)}].
	\end{align*}
	From the induction hypothesis and the commutation relations \cref{eq:commutation relations T S}, it follows that 
	\begin{align*}
		\sum_{pq}[h_{pq}T_{pq},W^{(k-1)}] &= \sum_{pq}h_{pq}\sum_{\vec{p} \vec{q}} w_{\vec{p} \vec{q}}^{(k-1)} \sum_{i=0}^{k} W_{p_0q_0}\cdots W_{p_{i-1}q_{i-1}}[T_{pq},W_{p_iq_i}]W_{p_{i+1}q_{i+1}}\cdots W_{p_kq_k}
	\end{align*}
	is a sum over products of at most $k+1$ operators in $\{T_{pq}\}_{pq}\cup\{S_{pq}\}_{pq}$.
	Then, by \cref{lemma:mesh norm}, the corresponding coefficients are upper bounded by
	\begin{align*}
		\sum_{i=0}^{k} \sum_{pq} \sum_{\vec{p} \vec{q}} \abs{h_{pq}} \abs*{w_{\vec{p} \vec{q}}^{(k-1)}} &(\delta_{pp_i} + \delta_{pq_i} + \delta_{qp_i} + \delta_{qq_i})
		 \leq 4(k+1) \norm*{w^{(k-1)}}_1 \max_p \sum_q \abs*{h_{pq}}\\
		 &= \begin{cases}
		 	\bigO(4^k(k+1)!\nsys\log^{k+1}\nsys)&d=1\\
			\bigO(4^k(k+1)!\nsys^{(k+1)(\frac{d-1}{d}) + 1)} &d>1
		  \end{cases}.
	\end{align*}
	Similarly, $[T_{pq}T_{rs},W^{(k-1)}] = [T_{pq},W^{(k-1)}]T_{rs} + T_{pq}[T_{rs},W^{(k-1)}]$ is a sum over products of at most $k+2$ operators in $\{T_{pq}\}_{pq}\cup\{S_{pq}\}_{pq}$.
	Using \cref{lemma:mesh norm}, the coefficients are upper bounded by
	\begin{align*}
		2\sum_{i=0}^k\sum_{pqrs} \sum_{\vec{p} \vec{q}} \abs{h'_{pqrs}} \abs*{w_{\vec{p},\vec{q}}^{(k-1)}} &(\delta_{pp_i} + \delta_{pq_i} + \delta_{qp_i} + \delta_{qq_i}) \leq 8(k+1) \norm*{w^{(k-1)}}_1 \max_p \sum_{qrs} \abs*{h'_{pqrs}}\\
		 &= \begin{cases}
		 	\bigO(8^k(k+1)!\nsys\log^{k+1}\nsys)\quad & \text{if} \quad d=1\\
			\bigO(8^k(k+1)!\nsys^{(k+1)(\frac{d-1}{d}) + 1)} \quad & \text{if} \quad d>1
		  \end{cases}.
	\end{align*}
	\end{proof}
Each term $(\gamma_0,\gamma_1,\cdots,\gamma_k)$ in $W^{(k)}$ corresponds to a term  $(\vec{p}\vec{q})$ in \Cref{commutator}. 
Because of $\norm{T_{pq}} = \norm{S_{pq}} = 1$, we then obtain
\begin{align*}
	\alpha_{\mathrm{comm},k} = 
	\begin{cases}
		\bigO(8^k(k+1)!\nsys\log^{k+1}\nsys)\quad & \text{if} \quad d=1\\
		\bigO(8^k(k+1)!\nsys^{(k+1)(\frac{d-1}{d}) + 1)} \quad & \text{if} \quad d>1
	\end{cases}
\end{align*}
which concludes the proof of the Theorem.


\end{document}